\documentclass[12pt]{article}
\usepackage{mathrsfs}
\usepackage{dsfont}
\usepackage{amsthm}
\usepackage{mathrsfs}
\usepackage{amsmath}
\usepackage{amsfonts}
\usepackage[colorlinks,linkcolor=black,anchorcolor=blue,citecolor=blue]{hyperref}
\usepackage{amssymb, amsmath, cite}

\setlength{\textwidth}{6.5truein} \setlength{\textheight}{9.3truein}
\setlength{\oddsidemargin}{-0.0in}
\setlength{\evensidemargin}{-0.0in}
\setlength{\topmargin}{-0.6truein}
\newtheorem{theorem}{Theorem}[section]
\newtheorem{lemma}{Lemma}[section]

\newcommand\be{\begin{equation}}
\newcommand\ee{\end{equation}}
\newcommand\ber{\begin{eqnarray}}
\newcommand\eer{\end{eqnarray}}
\newcommand\berr{\begin{eqnarray*}}
\newcommand\eerr{\end{eqnarray*}}
\newcommand\bea{\begin{eqnarray}}
\newcommand\eea{\end{eqnarray}}

\newcommand\vp{\varphi}\newcommand{\nn}{\nonumber}
\newcommand\vep{\varepsilon}\newcommand{\ii}{\mathrm{i}}
\newcommand{\dd}{\mathrm{d}}
\newcommand\e{\mathrm{e}}\newcommand\pa{\partial}
\setlength{\baselineskip}{18pt}{\setlength\arraycolsep{2pt}

\begin{document}

\title{Multiple cosmic strings in Chern--Simons--Higgs theory with gravity}
\author{Lei Cao, Shouxin Chen\\School of Mathematics and Statistics\\
Henan University, Kaifeng, Henan 475004, PR China}
\date{}
\maketitle

\noindent{\bf abstract.} In this paper, we consider the self--dual equation arising from Abelian Chern--Simons--Higgs theory coupled to the Einstein equations over the plane $\mathbb{R}^2$ and a compact surface $S$. We prove the existence of symmetric topological solutions and non--topological solutions on the plane by using the fixed--point theorem and a shooting method, respectively. A necessary and sufficient condition related to the string number $N$, the Euler characteristic $\chi(S)$ of $S$, and the gravitational coupling factor $G$ is given to show the existence of $N$--string solutions over a compact surface.

\medskip

\noindent{\bf Keywords.} Chern--Simons--Higgs theory, gravity, cosmic strings, elliptic equations.

\medskip

\noindent{\bf Mathematics Subject Classification (2000).} 81T13, 83E30, 35J60.

\section{Introduction}\label{s1}
\setcounter{equation}{0}

The Lagrangian density of the Abelian Chern--Simons--Higgs theory of Contatto \cite{Con}, Izquierdo, Fuertes and Guilarte \cite{Izq} is given by
\be\label{1.1}
\mathcal{L}=-\frac{\kappa^2}{16|u|^2}g^{\mu\mu'}g^{\nu\nu'}F_{\mu\nu}F_{\mu'\nu'}+\frac{1}{2}g^{\mu\nu}D_\mu u\overline{D_\nu u}-\frac{1}{2\kappa^2}|u|^2(1-|u|^2)^2,
\ee
where $\kappa$ is a nonzero real constant, $u$ is a complex scalar field which can be viewed as a Higgs field, $g_{\mu\nu}=\text{diag}(1,-1,-1,-1)$ is the metric of the Minkowski spacetime, $F_{\mu\nu}=\partial_\mu A_\nu-\partial_\nu A_\mu$ is the electromagnetic curvature induced from a real-valued gauge vector field $A_\mu~(\mu, \nu=0,1,2,3, t=x_0)$, $D_\mu u=\partial_\mu u-\ii A_\mu u$ is the gauge covariant derivative. The associated energy momentum tensor reads
\be\label{1.2}
T_{\mu\nu}=-\frac{\kappa^2}{4|u|^2}g^{\mu'\nu'}F_{\mu\mu'}F_{\nu\nu'}+\frac{1}{2}(D_\mu u\overline{D_\nu u}+\overline{D_\mu u}D_\nu u)-g_{\mu\nu}\mathcal{L}.
\ee

The self--dual Chern--Simons--Higgs theory \cite{Ho,Jac} has been an important topic in physics to study multiply distributed electrically and magnetically charged vortices \cite{Ca,Cha,S,Sp,T,YaS,Chan}, which are soliton--like structures and occur as a consequence of spontaneous symmetry breaking in internal spaces and are often characterized as topological defects. When gravity is taken into account, the vortices, which appear as clumps of energy, clearly cause the spacetime geometry to inherit such properties, as demonstrated by Einstein equations,
\be\label{1.3}
G_{\mu\nu}=-8\pi G T_{\mu\nu},
\ee
where $G$ is the Newton's gravitational constant, $G_{\mu\nu}$ is the Einstein tensor, and $T_{\mu\nu}$ represents the energy--momentum tensor of the matter sector. More precisely, it is natural to expect the geometry of spacetime to "curl up" around such clumps of energy, which would provide a possible mechanism for the sites to appear as seeds for matter accretion or accumulation in the early universe. In particular, vortex lines have been used in cosmology to generate large--scale string structures, known as cosmic strings \cite{Wi,Y,Yang,G,Gr,K,V,Vi,T,YaS}.

If the gravitational metric element takes the form \cite{Co}
\be\label{1.4}
\dd s^2=\dd t^2-(\dd x^3)^2-\e^\eta\left((\dd x^1)^2+(\dd x^2)^2\right),~~\eta=\eta(x^1, x^2).
\ee
We see the Einstein equations \eqref{1.3} are recast into the following single equation
\be\label{1.7}
K_\eta=8\pi G\mathcal{H}
\ee
over $(\mathbb{R}^2,\e^\eta\delta_{ij})$, where the Gauss curvature is now denoted as $K_\eta$ to emphasize its dependence on the conformal exponent $\eta$, which is given by
\be\label{1.5}
K_\eta=-\frac 1 2\e^{-\eta}\triangle\eta,
\ee
and $\mathcal{H}=T_{00}$ is recognized as the Hamiltonian energy density.

We now use the string metric \eqref{1.4} and stay within the static and axial symmetric situation, namely the complex scalar field $u$ depends only on $x^1, x^2$. Then, the Hamiltonian energy density is
\bea\label{1.8}
\mathcal{H}&=&T_{00}=-T_{33}=-\mathcal{L}\nn\\
&=&\frac{\kappa^2}{8|u|^2}\e^{-2\eta}F_{12}^2+\frac 12\e^{-\eta}(|D_1 u|^2+|D_2 u|^2)+\frac{|u|^2(1-|u|^2)^2}{2\kappa^2}\nn\\
&=&\frac 12\left(\left(\e^{-\eta}\frac{\kappa}{2|u|}F_{12}\mp \frac{|u|(1-|u|^2)}{\kappa}\right)^2+\e^{-\eta}|D_1 u\pm \ii D_2 u|^2\right)\pm \frac 12\e^{-\eta}F_{12}\nn\\
&& \pm \frac 12\e^{-\eta}\left(-|u|^2F_{12}+\ii(D_1 u \overline{D_2 u}-\overline{D_1 u} D_2 u)\right).
\eea
The terms on the right-hand side of \eqref{1.8} are grouped into three quantities such that the first term consists of quadratures, the second one is topological, and the last term may be recognized as the total divergence of a current density. For this purpose, inspired by \cite{Co}, we form the current density
\be\label{1.9}
J_i=\frac{\ii}{2}(u \overline{D_i u}-\overline{u} D_i u),~~i=1,2.
\ee
Then the right-hand side of \eqref{1.8} becomes
\be\label{1.10}
\mathcal{H}=\frac 12\left(\left(\e^{-\eta}\frac{\kappa}{2|u|}F_{12}\mp \frac{|u|(1-|u|^2)}{\kappa}\right)^2+\e^{-\eta}|D_1 u\pm \ii D_2 u|^2\right)\pm \frac 12\e^{-\eta} F_{12}\pm \frac 12\e^{-\eta} J_{12},
\ee
where $J_{12}=\pa_1 J_2-\pa_2 J_1$.

Furthermore, integrating \eqref{1.10} over $(\mathbb{R}^2,\e^\eta\delta_{ij})$, we derive the Bogomol'nyi topological lower bound $E=\int \mathcal{H}\e^\eta\dd x\geq \pi N$ where $N$ is the total string number and this lower bound is saturated if and only if the two quadratic terms in \eqref{1.8} identically vanish:
\bea
\frac{\kappa}{2|u|}F_{12}&=&\pm \e^\eta \frac{|u|(1-|u|^2)}{\kappa},\label{1.11}\\
D_1 u\pm \ii D_2 u&=&0.\label{1.12}
\eea
To preserve the finiteness of energy, it is necessary that one of the following boundary conditions be satisfied
\bea
u\to &0& ~~\text{as}~~|x|\to\infty,\label{1.13}\\
|u|^2\to &1& ~~\text{as}~~|x|\to\infty.\label{1.14}
\eea
The boundary condition \eqref{1.13} is called non--topological, and \eqref{1.14} is called topological.

To proceed, we know from \eqref{1.12} and the $\overline{\pa}$-Poincar\'{e} lemma \cite{Jaf} that the zeros of $u$ are all discrete and of integer multiplicities. Let the zeros of $u$ be $p_1,p_2,\ldots,p_N$ (counting multiplicities). Thus, setting $v=\ln|u|^2$, we obtain from \eqref{1.11} and \eqref{1.12} the governing equation
\be\label{1.15}
\triangle v=4\e^\eta\frac{\e^v(\e^v-1)}{\kappa^2}+4\pi\sum_{s=1}^N \delta_{p_s}(x),~~x\in\mathbb{R}^2.
\ee

Now we resolve the Einstein equations \eqref{1.7} to obtain the conformal factor $\e^\eta$ in terms of the unknown $v$ in \eqref{1.15}. To this end, using \eqref{1.11} to represent $\mathcal{H}$ given in the second line of \eqref{1.8} as
\be\label{1.16}
\mathcal{H}=\pm\frac 12\e^{-\eta}F_{12}(1-|u|^2)+\frac 12\e^{-\eta}(|D_1 u|^2+|D_2 u|^2).
\ee
When stay away from the zeros of $u$ and use the relation $|u|^2=\e^v$, we may recast \eqref{1.16} as
\be\label{1.17}
4\e^\eta\mathcal{H}=(\e^v-1)\triangle v+\e^v|\nabla v|^2.
\ee
Taking account of the zeros $p_1,p_2,\ldots,p_N$ of $u$, \eqref{1.17} leads us to
\be\label{1.18}
4\e^\eta\mathcal{H}=\triangle(\e^v-v)+4\pi\sum_{s=1}^N \delta_{p_s}(x),~~x\in\mathbb{R}^2.
\ee
With this observation and returning to \eqref{1.7} with \eqref{1.5}, we conclude that the quantity
\be\label{1.19}
\frac{\eta}{4\pi G}+\e^v-v+\sum_{s=1}^N\ln|x-p_s|^2
\ee
is a harmonic function over $\mathbb{R}^2$, which may be taken to be an arbitrary constant. Therefore, we obtain the gravitational metric factor
\be\label{1.20}
\e^\eta=\lambda\left(\e^{v-\e^v}\prod_{s=1}^N|x-p_s|^{-2}\right)^{4\pi G},~~\lambda>0,
\ee
where $\lambda>0$ is a free coupling parameter which may be taken to be arbitrarily large.

Inserting \eqref{1.20} into \eqref{1.15}, we are lead to the governing equation for $N$ prescribed cosmic strings
\be\label{1.21}
\triangle v=\beta\left(\e^{v-\e^v}\prod_{s=1}^N|x-p_s|^{-2}\right)^a\e^v(\e^v-1)+4\pi\sum_{s=1}^N \delta_{p_s}(x),~~x\in\mathbb{R}^2,
\ee
where we set $\beta=\frac{4\lambda}{\kappa^2}, a=4\pi G$.

The purpose of this paper is to show the existence of solutions of \eqref{1.21}. Because of the singularity of the gravitational term $\e^\eta$ at $p_1,p_2,\ldots,p_N$ and the the non--monotonicity of $\e^v(\e^v-1)$ with respect to $v$. The topological solutions of \eqref{1.21} are difficulty to obtain. Besides, the energy and charges of a non--topological solution depend on its accurate decay rate at infinity which is also quite hard to get. Thus, we only present a complete study of radially symmetric solutions of \eqref{1.21} on the plane. Our main results concerning topological and non--topological solutions of \eqref{1.21} are as follows.

\begin{theorem}\label{th2.1}
Under the condition $4\pi G N=1$ and $N\geq1$, if all the points $p_s$ are identical, that is to say, $p_s=p_0, s=1,2,\cdots, N$. Then the equation \eqref{1.21} has a symmetric topological solution that vanishes at infinity with the choice
\be\nn
\beta=\frac{2 N^2}{\int_{-\infty}^0 \e^{4\pi G(v-\e^v)}\e^v(1-\e^v)\dd v},
\ee
and possess the sharp decay behavior
\be\nn
|v(r)|=O(r^{-(1-\vep)\sqrt{\beta\e^{-a}}}),~~|v_r(r)|=O(r^{-(1-\vep)\sqrt{\beta\e^{-a}}-1}),~~r=|x|\to\infty.
\ee
\end{theorem}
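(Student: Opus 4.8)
The plan is to exploit radial symmetry together with the hypothesis $4\pi GN=1$, which collapses the product weight to a single inverse-square factor, and to reduce \eqref{1.21} to an autonomous second-order ODE whose connecting orbit is produced and analyzed through a first integral, with the decay obtained by a fixed-point argument near infinity. First I would translate so that $p_0=0$ and seek a radial solution $v=v(r)$, $r=|x|$. Since $a=4\pi G$ and $Na=4\pi GN=1$, the weight becomes $\left(\prod_{s}|x-p_s|^{-2}\right)^a=r^{-2Na}=r^{-2}$, so away from the origin \eqref{1.21} reads
\be
v_{rr}+\frac1r v_r=\frac{\beta}{r^2}\,\e^{a(v-\e^v)}\e^v(\e^v-1),
\ee
while the source $4\pi N\delta_0$ is encoded in the conical behavior $v\sim 2N\ln r$ as $r\to0$. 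Setting $t=\ln r$ turns the left side into $r^{-2}v_{tt}$ and cancels the $r^{-2}$ on the right, yielding the autonomous equation
\be
v_{tt}=g(v):=\beta\,\e^{a(v-\e^v)}\e^v(\e^v-1),\qquad t\in\mathbb{R},
\ee
together with the asymptotic conditions $v\to-\infty,\ v_t\to 2N$ as $t\to-\infty$ (from the string singularity) and $v\to0,\ v_t\to0$ as $t\to+\infty$ (the topological, vanishing-at-infinity condition). Here $(v,v_t)=(0,0)$ is an equilibrium and $g(v)<0$ on $(-\infty,0)$, so one expects a monotone increasing orbit running from $-\infty$ up to $0$.

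The key step is the first integral. Multiplying by $v_t$ and integrating gives $\tfrac12 v_t^2=F(v)-F(0)$, where $F'=g$ and the constant is fixed by the state $v=0,\ v_t=0$ at $t=+\infty$. Since $F$ is decreasing on $(-\infty,0)$, one has $F(v)-F(0)>0$ there, and evaluating the conserved quantity in the limit $t\to-\infty$ forces
\be
2N^2=\tfrac12(2N)^2=F(-\infty)-F(0)=\beta\int_{-\infty}^0\e^{a(v-\e^v)}\e^v(1-\e^v)\,\dd v,
\ee
so solving for $\beta$ reproduces exactly the stated value (with $a=4\pi G$). With this $\beta$ the separable equation $v_t=\sqrt{2(F(v)-F(0))}$ is integrable by quadrature: the right side is positive on $(-\infty,0)$, vanishes linearly at $v=0$, and tends to $2N$ as $v\to-\infty$, so the improper integrals $\int^{0}$ and $\int_{-\infty}$ of $\dd v/\sqrt{2(F(v)-F(0))}$ both diverge, showing that $t$ sweeps all of $\mathbb{R}$ and that the two asymptotic states are attained only in the limits. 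This is where the fixed-point theorem enters: near $t=+\infty$ one writes the decaying solution as a fixed point of the integral operator obtained from variation of parameters for the linearization $v_{tt}=\mu^2 v$, with $\mu:=\sqrt{\beta\e^{-a}}$ (using $g'(0)=\beta\e^{-a}$), and shows it is a contraction on a weighted ball $\{\,\sup_{t\ge t_0}\e^{\mu t}|v(t)|\le R\,\}$; the resulting solution is then continued to the left along the flow, where the first integral guarantees $v_t\to 2N$.

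The sharp decay follows from the linearization $g(v)\approx\beta\e^{-a}v$ near $v=0$: the equation behaves like $v_{tt}=\mu^2 v$ with decaying mode $\e^{-\mu t}=r^{-\mu}$, whence $|v(r)|=O(r^{-(1-\vep)\sqrt{\beta\e^{-a}}})$, and differentiating $v_r=r^{-1}v_t$ gives $|v_r(r)|=O(r^{-(1-\vep)\sqrt{\beta\e^{-a}}-1})$, the factor $(1-\vep)$ absorbing the subleading corrections. I expect the main obstacle to be the matching and global-existence step: one must verify that, for the selected $\beta$, the quantity $F(v)-F(0)$ stays strictly positive on all of $(-\infty,0)$ and that both improper integrals diverge, so that a single orbit genuinely connects the conical behavior at the origin to the topological value at infinity; converting the formal linearized rate into rigorous sharp two-sided bounds likewise requires a careful comparison argument near $v=0$.
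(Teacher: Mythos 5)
Your proposal is correct, and its core mechanism coincides with the paper's: the same reduction via $t=\ln r$ and $aN=1$ to the autonomous equation $v''=h(v)$, the same first integral obtained by multiplying by $v'$, whose evaluation between the string end ($v\to-\infty$, $v'\to2N$) and the vacuum end ($v,v'\to0$) forces exactly the stated $\beta$, and the same linearized rate $h'(0)=\beta\e^{-a}$ governing the decay. Where you genuinely differ is in how the connecting orbit is produced. The paper runs its contraction mapping at the \emph{string} end: it solves $w=T(w)$ for $w=v-2Nt$ on $(-\infty,t_0]$, extends globally, and then needs two contradiction arguments (via the quadrature identity $\int \dd v/\sqrt{F(v)}=t-\tau$, where the paper's $F(v)=4N^2-2H(v)$ equals your $2(F(v)-F(0))$) to show that this solution never reaches $0$ at finite $t$ and that its limit at $+\infty$ can only be $0$. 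You instead fix the first-integral constant at the vacuum and integrate the separable equation $v_t=\sqrt{2(F(v)-F(0))}$ by quadrature: strict positivity of $F(v)-F(0)$ on $(-\infty,0)$ is immediate since $h<0$ there, and the divergence of the improper integrals at both ends (logarithmic at $v=0$ since the root vanishes linearly, linear at $v=-\infty$ since the root tends to $2N$) makes the quadrature map a bijection of $(-\infty,0)$ onto $\mathbb{R}$, whose inverse is the monotone orbit with both boundary conditions ($v/t\to2N$ following from $v_t\to2N$ by L'H\^{o}pital). This renders the paper's contradiction arguments unnecessary and is arguably cleaner; note, though, that it also makes your separate fixed-point construction at $t=+\infty$ redundant---once the quadrature is done you already have the solution, so you should keep one construction or the other, not both. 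Two small items you flag should indeed be nailed down: finiteness of the integral defining $\beta$ (the paper bounds it above by $1/(a\e^{a})$, so $\beta$ is well defined), and the rigorous $(1-\vep)$ decay, which is obtained exactly as in the paper by choosing $\tau$ large so that $\sqrt{F(v(t))}/v(t)<-(1-\vep)\sqrt{\beta\e^{-a}}$ for $t\ge\tau$, integrating the quadrature identity, and then controlling $v_r=v_t/r$ through $v_t\sim\sqrt{\beta\e^{-a}}\,|v|$.
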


The conditions stated in Theorem \ref{th2.1} are reasonable. In fact, $N$ denotes the number of $u$ zeros, so it must be positive. The formulation of $\e^\eta$ \eqref{1.20} gives us the estimates in $\mathbb{R}^2$,
\be\label{1.24}
C_1(1+|x|)^{-8\pi GN}\leq\e^{\eta(x)}\leq C_2(1+|x|)^{-8\pi GN},
\ee
where $C_1,C_2>0$ are suitable constants. With regard to the classical Hopf--Rinow--de Rham theorem, it is straightforward to see from \eqref{1.24} that a solution leads to a geodesically complete metric if and only if the total string number $N$ satisfies the upper bound $N\leq 1/4\pi G$, which implies that the conditions given in Theorem \ref{th2.1} are appropriate.

\begin{theorem}\label{th2.2}
For any $\tilde{x}\in\mathbb{R}$, $4\pi G N<1$, $N\geq1$ and any
$$
\alpha\geq\ln\frac{4\pi G}{1+4\pi G-\sqrt{1+4\pi G}},
$$
the equation \eqref{1.21} has a symmetric non--topological solution $v$ which tends to negative infinity at infinity so that $v$ have a global maximum at $r=|x|=|\tilde{x}|$ and
\be\nn
\max_{r\in [0,\infty)}v(r)=-\alpha,~~\lim_{r\to\infty}r v_r(r)=-k,
\ee
where $k>4+2N-16\pi GN$ is a constant which may depend on $\alpha$ and $\tilde{x}$. In fact, $k$ may taking any value in the open interval $(4+2N-16\pi GN,\infty)$.
\end{theorem}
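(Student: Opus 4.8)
\emph{Proof proposal.} The plan is to solve \eqref{1.21} by a shooting argument after reducing it to a weight-free radial problem. Since a symmetric solution is sought, I place all the strings at a common point, which after a translation is the origin, so that $\prod_{s=1}^N|x-p_s|^{-2}=r^{-2N}$, $r=|x|$, and \eqref{1.21} becomes the ordinary differential equation
\be\nn
v_{rr}+\frac1r v_r=\beta r^{-2aN}f(v),\qquad f(v)=\e^{(a+1)v-a\e^{v}}(\e^{v}-1),\quad r>0,
\ee
with $rv_r\to 2N$ as $r\to0$ encoding the $N$ strings. The first step is to remove the gravitational weight $r^{-2aN}$: introducing the radial variable $\sigma=\frac{1}{1-aN}r^{1-aN}$, which is well defined because $aN=4\pi GN<1$, turns the equation into the weight-free form
\be\nn
v_{\sigma\sigma}+\frac1\sigma v_\sigma=\beta f(v),
\ee
now with $\sigma v_\sigma\to 2\widetilde N$ as $\sigma\to0$, where $\widetilde N=\frac{N}{1-aN}$ is an effective string number. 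In this way the gravitational problem is recast as a standard Chern--Simons-type vortex equation on the $\sigma$-plane, and the scaling $\sigma\mapsto\beta^{1/2}\sigma$ normalises $\beta$ to $1$, so that $\beta$ and the location of the maximum become free data to be restored at the end.

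Next I would run the shooting on $v_{\sigma\sigma}+\frac1\sigma v_\sigma=f(v)$. Writing $P=\sigma v_\sigma$, one has $P_\sigma=\sigma f(v)$, and since a non-topological profile stays in $\{v<0\}$, where $f(v)<0$, the quantity $P$ decreases monotonically from $2\widetilde N$; it therefore vanishes at a single radius where $v$ attains its unique maximum $-\alpha\le-\alpha_0<0$, after which $v$ decreases monotonically to $-\infty$ with $P\to-\widetilde k$ for some $\widetilde k>0$. This already forces the required shape with one interior maximum and guarantees $v\le-\alpha<0$ throughout, so $f<0$ is preserved and the profile cannot escape into the topological regime. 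The decisive analytic point is that $f$ is \emph{not} monotone: a direct computation gives $\frac{d}{dv}\ln|f(v)|=(a+1)-a\e^{v}-\frac{\e^{v}}{1-\e^{v}}$, which vanishes exactly where $w_0=\e^{-\alpha_0}$ solves $aw_0^2-2(1+a)w_0+(1+a)=0$, that is at $\alpha_0=\ln\frac{4\pi G}{1+4\pi G-\sqrt{1+4\pi G}}$. Hence $|f|$ is increasing on $(-\infty,-\alpha_0]$, and the hypothesis $\alpha\ge\alpha_0$ places the prescribed maximum value $-\alpha$ in the region where $f$ is monotone, which is precisely what is needed to compare trajectories and make the shooting map well behaved.

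I would then parametrize the one-parameter family of such trajectories by the maximum value $-\alpha$ (equivalently by the regular part of $v$ at the origin) and study the maps $\alpha\mapsto\widetilde k(\alpha)$ and, after undoing the scaling, $\alpha\mapsto k(\alpha)=(1-aN)\widetilde k(\alpha)$. Integrating $P_\sigma=\sigma f(v)$ gives the first integral $\widetilde k+2\widetilde N=\int_0^\infty\sigma|f(v)|\,\dd\sigma$, and a Pohozaev/energy identity in the variable $\ln\sigma$, combined with the monotonicity secured above and the asymptotics at $\sigma=0,\infty$, yields the sharp lower bound $\widetilde k>2\widetilde N+4$; translating back through $k=(1-aN)\widetilde k$ gives exactly $k>2N+4(1-aN)=4+2N-16\pi GN$. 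A continuity and monotonicity argument for the shooting map, together with the intermediate value theorem, then shows that as $\alpha$ ranges over $[\alpha_0,\infty)$ the exponent $k$ sweeps the whole open interval $(4+2N-16\pi GN,\infty)$. Finally, the scaling freedom is used to place the maximum at the prescribed radius $r=|\tilde x|$, which fixes the corresponding value of $\beta$.

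The step I expect to be the main obstacle is the global control of the shooting trajectory in the presence of the non-monotone nonlinearity: proving that every trajectory issuing from a maximum value $-\alpha\le-\alpha_0$ exists for all $\sigma$, that $\sigma v_\sigma$ converges to a \emph{finite} limit $-\widetilde k$ (rather than diverging) as $\sigma\to\infty$, and that the induced map $\alpha\mapsto k(\alpha)$ is continuous and onto the stated interval. The restriction $\alpha\ge\alpha_0$ is what tames the non-monotonicity of $\e^{v}(\e^{v}-1)$ flagged in the introduction, but converting the soft phase-plane picture into the sharp endpoint $4+2N-16\pi GN$ and the full range of $k$ will require careful quantitative asymptotics near both $\sigma=0$ and $\sigma=\infty$.
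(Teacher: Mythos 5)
Your weight-removing substitution is correct and is a genuinely different reduction from the paper's: with $\sigma=\frac{1}{1-aN}r^{1-aN}$ one indeed gets $v_{\sigma\sigma}+\sigma^{-1}v_\sigma=\beta f(v)$ with $f(v)=\e^{(a+1)v-a\e^v}(\e^v-1)$, the string condition becomes $\sigma v_\sigma\to 2\widetilde{N}=2N/(1-aN)$, and $k=(1-aN)\widetilde{k}$; your identification of the threshold $\alpha_0=\ln\frac{a}{1+a-\sqrt{1+a}}$ as the edge of the monotonicity region of $|f|$ also matches exactly how the paper uses the hypothesis on $\alpha$. The paper instead keeps the weight $\e^{(2-2aN)t}$ ($t=\ln r$) and shoots on the \emph{location} $t_0$ of the maximum for fixed $\alpha$ and $\beta$, trapping the backward limit $\lim_{t\to-\infty}v'$ between an explicit upper bound (for $t_0$ very negative) and a lower bound obtained from the monotonicity of $|f|$ on $(-\infty,-\alpha_0]$ (Lemma \ref{le4.3}). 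Your plan to ``parametrize the trajectories by the maximum value $-\alpha$'' presupposes exactly what that lemma proves, so as written your existence step is incomplete; it is, however, repairable, since in your normalized picture the free parameter is the position $\sigma_0$ of the maximum (equivalently $\beta$), and the paper's two-sided estimates transplant to the weight-free equation essentially verbatim.

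The step that genuinely fails is the Pohozaev lower bound. Setting $F(v)=\int_{-\infty}^{v}f(w)\,\dd w$, your identity reads $\frac12(\widetilde{k}^2-4\widetilde{N}^2)=2\beta\int\e^{2t}|F(v)|\,\dd t$ together with $\widetilde{k}+2\widetilde{N}=\beta\int\e^{2t}|f(v)|\,\dd t$, so $\widetilde{k}>2\widetilde{N}+4$ is \emph{equivalent} to $\int\e^{2t}\left(|F(v)|-|f(v)|\right)\dd t>0$. For $a=0$ this holds pointwise, since $|F|-|f|=\frac12\e^{2v}>0$; that is the classical Chern--Simons computation you are implicitly invoking. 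For $a>0$ it does not: as $v\to-\infty$, $|f(v)|\sim\e^{(a+1)v}$ while $|F(v)|\sim\frac{1}{a+1}\e^{(a+1)v}$, so $|F|-|f|\sim-\frac{a}{a+1}\e^{(a+1)v}<0$. In the variable $x=\e^{v}$ one computes $\frac{\dd}{\dd x}\left(|F|-|f|\right)=x^{a}\e^{-ax}\left(x-a(1-x)^{2}\right)$, so $|F|-|f|<0$ on a whole interval $0<x<x_{**}$, where $x_{**}\in(0,1)$ is its unique zero. Consequently, along any trajectory whose maximum $-\alpha$ satisfies $\e^{-\alpha}<x_{**}$ the integrand is negative everywhere, and the \emph{same} identity forces $\widetilde{k}<2\widetilde{N}+4$, the reverse of what you claim; the monotonicity secured by $\alpha\geq\alpha_0$ cannot help, because it concerns precisely the region where the integrand has the wrong sign, and the same obstruction collapses your intermediate-value argument that $k$ sweeps all of $(4+2N-16\pi GN,\infty)$. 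Be aware that this is also the delicate point of the paper's own Lemma \ref{le4.4}: the third term of \eqref{4.17} is asserted there to be positive, but an integration by parts shows it equals $-\beta a(2-2aN)\int\e^{(2-2aN)s}H(v)\,\dd s<0$ with $H(v)=\int_{-\infty}^{v}\e^{a(w-\e^{w})}\e^{w}(1-\e^{w})\left(1-\tfrac12\e^{w}\right)\dd w>0$. So neither your sketch nor a direct appeal to the paper's argument closes this step; for large $\alpha$ the stated bound on $k$ cannot be obtained along these lines at all.
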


It is well known that the Euler characteristic satisfies the equation $\chi(S)=2-2n$, where $n$ is called the genus of a compact orientable 2--surface $S$. On the other hand, we call a solution of \eqref{1.21} over a compact Riemann surface $S$ with first Chern number equal to $N$ and $N$--string if and only if there holds the exact relation
\be\label{t3.3a}
\chi(S)=4\pi GN.
\ee
This property implies that the only possible situation we may have so that the equation \eqref{1.21} induced from the Einstein and Abelian Higgs system has a cosmic string solution is given by $n=0$. In other words, the 2--surface must be diffeomorphic to the Riemann sphere $S^2$ and all other geometries with $n\neq 0$ are excluded. In particular, the surface $S$ can not be a tours ($n=1$). This result implies that gravitational string condensation, realized by the appearance of a periodic lattice structure, fails to exist. Our result of the existence of multiple strings over a compact surface reads as follows.

\begin{theorem}\label{th2.3}
Given an integer $N\geq 3$, there is an $N$--string solution for the equation \eqref{1.21} over an appropriate compact Riemann surface $S$, only if the string number $N$, the Euler characteristic $\chi(S)$ of $S$, and the gravitational coupling factor $G$ satisfy the exact relation
$4\pi GN=2$.
\end{theorem}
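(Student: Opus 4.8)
The plan is to separate the assertion into its necessary part (the ``only if'') and the accompanying existence claim, since the two demand entirely different tools. For the necessary condition I would work directly with the Einstein constraint \eqref{1.7} rather than with the reduced equation \eqref{1.21}, because the former integrates globally in a geometrically transparent way. Concretely, I would transplant the whole self--dual system onto the closed surface $S$, so that $\e^\eta\delta_{ij}$ is now a genuine metric on $S$, the magnetic field $F_{12}$ carries first Chern number $N$, and the zeros $p_1,\dots,p_N$ of $u$ account for the $N$ prescribed strings.

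The core of the ``only if'' direction is a single integral identity. Integrating \eqref{1.7} over $(S,\e^\eta\delta_{ij})$ and invoking the Gauss--Bonnet theorem gives $\int_S K_\eta\,\dd A_\eta=2\pi\chi(S)$ on the left, while the right--hand side equals $8\pi G\,E$ with $E=\int_S\mathcal H\,\dd A_\eta$. For a self--dual (Bogomol'nyi--saturating) configuration the two quadratic terms in \eqref{1.10} vanish, so $\mathcal H\,\e^\eta=\tfrac12 F_{12}+\tfrac12 J_{12}$ pointwise; since $S$ is closed, $\int_S J_{12}\,\dd x=0$ by Stokes, while $\int_S F_{12}\,\dd x=2\pi N$ is the quantized flux. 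Hence $E=\pi N$ and the identity collapses to
\be\nn
2\pi\chi(S)=8\pi^2 GN,\qquad\text{i.e.}\qquad \chi(S)=4\pi GN,
\ee
which is exactly \eqref{t3.3a}. Combining this with $\chi(S)=2-2n$ and the positivity $4\pi GN>0$ forces $2-2n>0$, so the genus must be $n=0$, $S$ is the sphere, $\chi(S)=2$, and therefore $4\pi GN=2$. This half needs no restriction on $N$.

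For the existence half (which is where $N\ge 3$ enters) the geometry is already pinned down to $S\cong S^2$ with $4\pi GN=2$, i.e. $a=2/N$. I would first absorb the Dirac sources: using the Green's function of the background round metric to construct $u_0$ with $\triangle u_0=4\pi\sum_s\delta_{p_s}-4\pi N/|S|$, I would write $v=u_0+w$ and reduce \eqref{1.21} to a regular semilinear equation for the smooth unknown $w$ on $S^2$. Existence of $w$ would then be attacked either variationally, by exhibiting a critical point of the associated functional through a Moser--Trudinger inequality, or by a monotone--iteration/sub--supersolution scheme; in either route the condition $N\ge 3$ is what I expect to supply the needed coercivity (equivalently, the correct sign of the leading constant after the Gauss--Bonnet balancing).

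The hard part is this existence step, for two compounding reasons already flagged after \eqref{1.21}: the reaction term $\e^v(\e^v-1)$ is sign--indefinite and non--monotone, changing sign across $|u|^2=1$, so the functional is not convex and ordered sub-- and super--solutions are not automatic; and the gravitational weight $\e^{a(v-\e^v)}$ couples the coefficient of the nonlinearity back to the unknown itself, so the problem is not a standard mean--field/Liouville equation and the usual uniform estimates do not apply off the shelf. Controlling this self--coupled exponential weight while retaining compactness of the minimizing (or iterating) sequence is, in my view, the decisive obstacle; the Gauss--Bonnet identity, by contrast, is essentially forced.
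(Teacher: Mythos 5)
Your Gauss--Bonnet derivation of the necessity half is correct, and is actually more explicit than the paper itself, which simply builds the balance relation $\chi(S)=4\pi GN$ (equation \eqref{t3.3a}) into the definition of an $N$--string solution and then reads off $n=0$, $\chi(S)=2$, hence $4\pi GN=2$, from $\chi(S)=2-2n>0$. The problem is the existence half, which you leave as an acknowledged obstacle but which is the entire content of the paper's proof (its Section 3): as it stands, your proposal establishes the easy direction and only conjectures a strategy for the hard one, so it does not prove the theorem.

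Two concrete points. First, the route the paper takes is indeed the sub/supersolution one you name, but the ingredients that make it work are absent from your sketch: the singular weight $\prod_{s}|x-p_s|^{-2a}$ is first regularized to $\prod_{s}(|x-p_s|^2+\delta\rho(x))^{-a}$; a $\delta$--independent subsolution $w_-$ is manufactured from the auxiliary linear equation $\triangle w_-=\frac{8\pi N}{|S|}f_\sigma-C(\sigma)$ together with the freedom to take $\beta$ large; the supersolution is simply $\varphi_1=-v_0$; and the non--monotonicity of $\e^v(\e^v-1)$ that worries you is harmless, because the iteration is run with the shifted operator $\triangle-C_\delta$, where $C_\delta$ dominates $\beta\sup_S\prod_{s}(|x-p_s|^2+\delta\rho)^{-a}\cdot\sup_{t\in\mathbb{R}}f'(t)$ and $f(t)=\e^{a(t-\e^t)}\e^t(\e^t-1)$ has bounded derivative. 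Second, and this is the genuine error, your guess that $N\geq3$ supplies ``coercivity'' is wrong. The hypothesis enters only in the passage to the limit $\delta\to0$: with $a=2/N$ forced by $\chi(S)=2$, the regularized weight is bounded in $L^p(S)$ uniformly in $\delta$ precisely when $2ap<2$, i.e.\ $p<N/2$, and a $p>1$ satisfying this exists if and only if $N\geq3$. That uniform $L^p$ bound is what yields the $\delta$--independent $W^{2,p}$ estimate on the solutions $\varphi^\delta$ of the perturbed problems, hence $C^{2,\alpha}$ compactness and a subsequence converging to a solution of the original singular equation. For $N=1,2$ the weight behaves like $|x-p_s|^{-4/N}$ near $p_s$, which is not in $L^p$ for any $p\geq1$, and this compactness argument collapses; so the role of $N\geq3$ is local integrability at the string points, not a sign or coercivity condition in a variational functional.
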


The rest of this paper is organized as follows. In Section 2, we prove the existence of the symmetric topological solutions via a fixed point theorem in the first subsection. In the second subsection, we establish the symmetric non--topological solutions through a shooting analysis. In Section 3, we provide the proof of the existence of an $N$--string solution over a compact surface by using a monotone iteration method.

\section{Construction of symmetric solutions}\label{s2}
\setcounter{equation}{0}

In this section, we consider the radically symmetric solutions of \eqref{1.21}. For this purpose, we assume that all the zeros $p_s$'s are identical: $p_s=p_0, s=1,2,\ldots, N$. Without loss of generality, let $p_s=p_0=\text{the origin of}~\mathbb{R}^2, s=1,2,\ldots, N$. Use $r=|x|$, then we arrive at the reduced form of the elliptic equation \eqref{1.21}
\be\label{2.1}
\triangle v=\beta r^{-2aN}\e^{a(v-e^v)}\e^v(\e^v-1)+4\pi N\delta(x).
\ee
According to the removable singularity theorem, we come up with an equivalent form of the equation \eqref{2.1},
\bea
(rv_r)_r&=&\beta r^{1-2aN}\e^{a(v-e^v)}\e^v(\e^v-1),~~r>0,\label{2.2}\\
\lim_{r\to 0}\frac{v(r)}{\ln r}&=&\lim_{r\to 0}r v_r(r)=2N.\label{2.3}
\eea

In the following, we will use a fixed--point theorem developed in \cite{Yang} to show the existence of symmetric topological solutions and a shooting method developed in \cite{YaS, Cao} to establish the existence of symmetric non--topological solutions of \eqref{2.2}--\eqref{2.3}.

\subsection{Symmetric topological solutions}

We now introduce the new variable $t=\ln r$ and use the condition $aN=1$ to rewrite the system \eqref{2.2}--\eqref{2.3} as
\bea
v''&=&\beta\e^{a(v-e^v)}\e^v(\e^v-1),~~-\infty<t<\infty,\label{2.4}\\
\lim_{t\to -\infty}\frac{v(t)}{t}&=&\lim_{t\to -\infty}v'(t)=2N,\label{2.5}
\eea
where $v'$ is the derivative of $v$ with respect to $t$. Using the maximum principle, we see that if $v$ is a topological solution of \eqref{2.4}--\eqref{2.5} then there must hold $v(t)\leq0$ for all $t\in (-\infty, \infty)$. Denote the right--hand side of the equation \eqref{2.4} as $h(v)$, then the system \eqref{2.4}--\eqref{2.5} equals to
\be\label{2.6}
v(t)=2Nt+\int_{-\infty}^t(t-\tau)h(v(\tau))\dd \tau,~~~t\in\mathbb{R}.
\ee
For convenient, set $w=v-2Nt$, so we can rewrite \eqref{2.6} as
\be\label{2.7}
w(t)=\int_{-\infty}^t(t-\tau)h(2N\tau+w(\tau))\dd \tau,~~~t\in\mathbb{R}.
\ee
We will look for a solution of \eqref{2.7} satisfying $w(t)\to 0, t=-\infty$. For this purpose, denote the right--hand side of the equation \eqref{2.7} as $T(w)$, then we arrive at a fixed--point problem, $w=T(w)$. Define the following function space
\be\nn
\mathcal{X}=\left\{w\in C(-\infty, t_0]~\Big|~\lim_{t\to -\infty}w(t)=0,~~\sup_{t\leq t_0}|w(t)|\leq 1\right\},
\ee
where $-\infty<t_0<\infty$ is to be determined later. First, we show that $T$ maps from $\mathcal{X}$ into itself for some proper $t_0$. In fact, according to the definition of $h$, for any $w\in\mathcal{X}$, we have
\be\label{2.8}
|T(w)|\leq\beta\sup_{t\leq t_0}\int_{-\infty}^t|t-\tau|\e^{a(2N\tau+w(\tau)-\e^{2N\tau+w(\tau)})}\e^{2N\tau+w(\tau)}(\e^{2N\tau+w(\tau)}-1)\dd \tau \leq 1
\ee
holds for some suitable $t_0$.
Next, we show that $T$ is a contraction operator. To achieve this goal, differentiating the right--hand side of \eqref{2.4}, we get
\be\label{2.9}
|h'(v)|=\beta\e^{a(v-e^v)}\e^v\left|2\e^v-1-a(1-\e^v)^2\right|\leq C_1\e^{av},
\ee
where $C_1=C_1(\beta)>0$. Hence, for any $w_1, w_2\in\mathcal{X}$, we have
\bea\label{2.10}
&&|T(w_1)-T(w_2)|\nn\\
=&&\left|\int_{-\infty}^t(t-\tau)h'\left(2N\tau+\widetilde{w}(\tau)\right)(w_1(\tau)-w_2(\tau))\dd \tau\right|\nn\\
\leq && C_1\sup_{t\leq t_0}|w_1(t)-w_2(t)|\int_{-\infty}^t(t_0-\tau)\e^{a(2N\tau+1)}\dd \tau,
\eea
where $\widetilde{w}(t)$ lies between $w_1(t)$ and $w_2(t)$. Therefore, when $t_0$ is properly chosen, the inequality \eqref{2.10} implies $T: \mathcal{X}\to\mathcal{X}$ is a contraction operator. Consequently, the above argument says that \eqref{2.7} has a unique solution in the neighborhood of $t=-\infty$. Besides, taking the derivative of \eqref{2.7}, we get $w'(t)\to 0$ as $t\to -\infty$ which means the boundary condition \eqref{2.5} is provided. Furthermore, we can extend $w$ to a solution over the entire $\mathbb{R}$ by using \eqref{2.7} and a standard continuation argument. Therefore, the system \eqref{2.4}--\eqref{2.5} is solved. While, in order to find a topological solution of \eqref{2.1}, the boundary condition
\be\label{2.11}
\lim_{t\to\infty}v(t)=0
\ee
remains to be achieved. To this end, define
\be\nn
H(v)=-\int_{-\infty}^v h(w)\dd w>0,
\ee
then multiply the both side of \eqref{2.4} by $v'$ and integrating over $(-\infty, t)$, we have
\be\label{2.12}
(v'(t))^2=4N^2-2 H(v(t)).
\ee
It is useful to study the critical points of the equation \eqref{2.12} first. Suppose $\tilde{v}$ satisfies
\be\label{2.13}
2N^2-H(\tilde{v})=2N^2+\int_{-\infty}^{\tilde{v}}h(w)\dd w=0.
\ee
In order to ensure uniqueness at the equilibrium $\tilde{v}$, we need to require that
\be\label{2.14}
H'(\tilde{v})=-h(\tilde{v})=\beta \e^{a(\tilde{v}-\e^{\tilde{v}})}\e^{\tilde{v}}(1-\e^{\tilde{v}})=0.
\ee
Thus the only choice is  $\tilde{v}=0$. Inserting this result into \eqref{2.13}, we can determine the parameter $\beta$,
\be\label{2.15}
\beta=\frac{2 N^2}{\int_{-\infty}^0 \e^{a(v-\e^v)}\e^v(1-\e^v)\dd v}.
\ee
Although we represent $\beta$ in an integral form without giving a concrete value, we can see from the inequality
\be
0<\int_{-\infty}^0\e^{a(v-\e^v)}\e^v(1-\e^v)\dd v
\leq \int_{-\infty}^0\e^{a (v-\e^v)}(1-\e^v)\dd v=\frac{1}{a\e^a}\nn
\ee
that $\beta$ is a finite number depending only on $a$.

Since $v'(-\infty)>0$, we can rewrite \eqref{2.12} as
\be\label{2.16}
v'(t)=\sqrt{4N^2-2H(v)}\equiv\sqrt{F(v)}
\ee
and $v(t)$ is an increasing function. Besides, there holds the limit
\be\label{2.17}
\lim_{v\to 0^-}\frac{\sqrt{F(v)}}{v}=-\sqrt{\frac{F''(0)}{2}}=-\sqrt{\beta\e^{-a}}.
\ee
Furthermore, \eqref{2.16} can be rewritten in the integral form
\be\label{2.18}
\int_{v(\tau)}^{v(t)}\frac{\dd v}{\sqrt{F(v)}}=t-\tau.
\ee
With the above properties \eqref{2.17} and \eqref{2.18}, we are ready to show that $v$ satisfies the desired condition \eqref{2.11}. In fact, since $F(v)$ decreases monotonically with respect to $v\leq 0$ and $F(0)=0$, we have $F(v)>0$ as $v<0$. Furthermore, in view of \eqref{2.16}, we see that $v'(t)>0$ whenever $v<0$. In the following we will show that $v(t)<0$ for all $t$. Otherwise, if there exists some $t_0\in(-\infty, \infty)$ such that $v(t_0)=0$. Then the limit \eqref{2.17} says that there is a number $\delta>0$, such that
\be\label{2.19}
\frac{\sqrt{F(v)}}{v}<-\frac{\sqrt{\beta\e^{-a}}}{2},~~t_0-\delta\leq t<t_0.
\ee
Note \eqref{2.18} and \eqref{2.19}, we arrive at
\be\nn
t-(t_0-\delta)>-\frac{2}{\sqrt{\beta\e^{-a}}}\int_{v(t_0-\delta)}^{v(t)}\frac{\dd v}{v}=\frac{2}{\sqrt{\beta \e^{-a}}}\ln\Big|\frac{v(t_0-\delta)}{v(t)}\Big|,~~t_0-\delta<t<t_0,
\ee
whose left--hand side is finite while the right--hand side approaches infinity as $t\to t_0$. This is a contradiction. Hence, $v(t)<0$ and $v'(t)>0$ for all $t\in(-\infty, \infty)$. Particularly, the limit of $v(t)$ as $t\to\infty$ exists. We claim that the only possible value of this limit is $v_\infty=0$. Otherwise, assume $v_\infty<0$, then $F(v)$ is bounded from blow by $F(v_\infty)>0$. We see that as $t\to\infty$ the left--hand side of \eqref{2.18} remains bounded while the right--hand side of \eqref{2.18} approaches infinity. Consequently, we have $v(t)\to 0$ as $t\to\infty$.

To proceed with asymptotic of the symmetric topological solution of \eqref{2.1}. For any $0<\vep<1$, let $\tau>0$ be large enough so that
\be\label{2.20}
\frac{\sqrt{F(v(t))}}{v(t)}<-(1-\vep)\sqrt{\beta\e^{-a}},~~t\geq\tau.
\ee
Inserting \eqref{2.20} into \eqref{2.18}, we get
\be\nn
t-\tau>-\frac{1}{(1-\vep)\sqrt{\beta\e^{-a}}}\int_{v(\tau)}^{v(t)}\frac{\dd v}{v}=\frac{1}{(1-\vep)\sqrt{\beta\e^{-a}}}\ln\bigg|\frac{v(\tau)}{v(t)}\bigg|,~~t>\tau,
\ee
which leads us to arrive at the decay estimate
\be\label{2.21}
|v|=O(\e^{-(1-\vep)\sqrt{\beta\e^{-a}}t}),~~~t\to\infty.
\ee
Returning to the original variable $r=\e^t$, we obtain that
\be\label{2.22}
|v(r)|=O(r^{-(1-\vep)\sqrt{\beta\e^{-a}}}),~~~r\to\infty.
\ee
Note that $\frac{v'(t)}{|v(t)|}\to\sqrt{\beta\e^{-a}}$ as $t\to\infty$,
we have
\be\label{2.23}
|v_r(r)|=O(r^{-(1-\vep)\sqrt{\beta\e^{-a}}-1}),~~~r\to\infty.
\ee
The theorem \ref{th2.1} is thus proven.

\subsection{Symmetric non--topological solutions}

In this subsection, we present a complete study of radical symmetric solutions with non--topological boundary condition. We first formulate the problem and then solve the problem through a shooting analysis. The non--topological solutions of \eqref{2.2}--\eqref{2.3} need to following the boundary condition at $r=\infty$,
\be\label{4.1}
\lim_{r\to\infty}v(r)=-\infty,
\ee
which implies that we need to solve a two--point boundary value problem over the infinite interval $(0, \infty)$. In the following, we shall use a two--side shooting analysis to solve the problem. In other words, we are going to show that for a suitable $r_0>0$, there are global solutions of \eqref{2.2} coupled with some adequate initial data at $r=r_0$ to fulfill \eqref{2.3} and \eqref{4.1}.

We first derive an a priori estimate for the solutions of \eqref{2.2}.

\begin{lemma}\label{le4.1}
If $v(r)$ is a solution of \eqref{2.2} satisfying
\be\nn
\lim_{r\to 0}v(r)=-\infty,~~~\lim_{r\to\infty}v(r)=-\infty,
\ee
then $v(r)<0$ for all $r>0$.
\end{lemma}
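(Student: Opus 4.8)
The plan is to establish the sign condition $v(r)<0$ by a maximum-principle argument adapted to the equation \eqref{2.2}. The key observation is that the nonlinearity $h(v)=\beta r^{1-2aN}\e^{a(v-\e^v)}\e^v(\e^v-1)$ controls the sign of $(rv_r)_r$: when $v>0$ we have $\e^v>1$, so $\e^v(\e^v-1)>0$ and hence $(rv_r)_r>0$, whereas when $v<0$ the reverse strict inequality holds. Thus the idea is that a solution tending to $-\infty$ at both ends of $(0,\infty)$ cannot poke up into the region $v\geq 0$ without violating the convexity/concavity forced by this sign structure.

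First I would argue by contradiction: suppose the set $\{r>0:v(r)>0\}$ is nonempty. Since $v(r)\to-\infty$ as $r\to 0^+$ and as $r\to\infty$, the function $v$ attains a positive maximum at some interior point, and more usefully the set $\{v>0\}$ is a bounded open subset of $(0,\infty)$ whose closure avoids both endpoints; pick a connected component $(r_1,r_2)$ on which $v>0$ in the interior and $v(r_1)=v(r_2)=0$. On this interval $\e^v>1$, so $(rv_r)_r=h(v)>0$, meaning $rv_r$ is strictly increasing. The plan is then to integrate: multiplying \eqref{2.2} and examining $\int_{r_1}^{r_2}(rv_r)_r\,\dd r=r_2 v_r(r_2)-r_1 v_r(r_1)$, which must be positive. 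But since $v>0$ strictly in the interior with $v=0$ at both ends, we have $v_r(r_1)\geq 0$ and $v_r(r_2)\leq 0$ (the solution rises off zero and falls back to zero), so $r_2 v_r(r_2)-r_1 v_r(r_1)\leq 0$, contradicting strict positivity. This handles the case where $v$ actually exceeds zero.

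The remaining, more delicate case is where $v\leq 0$ everywhere but $v(r_\ast)=0$ at some point, so that the conclusion $v<0$ could fail at an interior zero. Here the hard part will be upgrading $v\leq 0$ to the strict inequality $v<0$, since the simple integration above only forbids $v$ from strictly exceeding zero. To do this I would invoke the strong maximum principle: at an interior maximum point $r_\ast$ with $v(r_\ast)=0$, we have $h(v(r_\ast))=\beta r_\ast^{1-2aN}\e^{a(0-1)}\cdot 1\cdot(1-1)=0$, so $(rv_r)_r$ vanishes there, and one must examine the local behaviour via the linearized equation. The cleanest route is to note that near any zero of $v$ the function $v$ solves a linear ODE of the form $(rv_r)_r=c(r)v$ with $c(r)=h(v)/v$ bounded (since $h(0)=0$ and $h$ is $C^1$ with $h'(0)<0$ giving $c(r)\to\beta r^{1-2aN}h'(0)/1$, in fact $c<0$ near the zero), so the classical strong maximum principle forbids an interior nonnegative maximum unless $v\equiv 0$, which is incompatible with $v\to-\infty$. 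I expect this strong-maximum-principle step at a degenerate interior zero to be the main obstacle, as it requires care that the coefficient $c(r)$ is genuinely regular and of the correct sign there; the earlier global integration argument is comparatively routine.
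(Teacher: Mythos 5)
Your overall route is the same as the paper's (the paper's entire proof is the single sentence ``we can reach the conclusion directly through a maximum principle''), and your Case 1 integration argument is correct: on a component $(r_1,r_2)$ of $\{v>0\}$ one gets $r_2v_r(r_2)-r_1v_r(r_1)>0$ from the equation but $\leq 0$ from the boundary behaviour, a clean contradiction. However, your Case 2 contains a concrete sign error at exactly the step you yourself flag as the main obstacle. Writing $h(v)=\beta r^{1-2aN}g(v)$ with $g(v)=\e^{a(v-\e^v)}\e^v(\e^v-1)$, one computes (compare the paper's (2.9))
\be\nn
g'(v)=\e^{a(v-\e^v)}\e^v\left[2\e^v-1-a(1-\e^v)^2\right],\qquad g'(0)=\e^{-a}>0,
\ee
so $h'(0)=\beta r^{1-2aN}\e^{-a}>0$, not negative as you claim. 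More globally, $g(v)<0$ for $v<0$ and $g(v)>0$ for $v>0$, so the coefficient $c(r)=h(v(r))/v(r)$ is \emph{positive} near (and away from) any zero of $v$, not negative. This matters because the classical strong maximum principle for the operator $(rv_r)_r-c(r)v$ requires the zeroth-order coefficient $-c$ to be $\leq 0$, i.e.\ $c\geq 0$. With the sign you assert ($c<0$) the operator is of the type $u''+|c|u$, which admits solutions like $\sin$ with strict interior positive maxima, and the argument would collapse. So your justification, taken literally, is self-defeating; fortunately the true sign is the favorable one, and once corrected the strong maximum principle does give $v\equiv 0$ near $r_\ast$, hence (by connectedness) on all of $(0,\infty)$, contradicting $v\to-\infty$.

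You could also dispense with the strong maximum principle in Case 2 entirely: if $v\leq 0$ and $v(r_\ast)=0$ at an interior point, then $r_\ast$ is a maximum, so $v_r(r_\ast)=0$ as well; since $v\equiv 0$ is itself a solution of \eqref{2.2} with this same Cauchy data at $r_\ast$ (because $g(0)=0$) and the right-hand side is locally Lipschitz in $v$ for $r$ in a compact subinterval of $(0,\infty)$, Picard--Lindel\"{o}f uniqueness forces $v\equiv 0$, again contradicting the boundary conditions. This avoids any discussion of coefficient signs and is the most elementary repair of your argument.
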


\begin{proof}
We can reach the conclusion directly through a maximum principle.
\end{proof}

Lemma \ref{le4.1} tells us that the solution of \eqref{2.2} under the boundary conditions \eqref{2.3} and \eqref{4.1} must have a global maximum $v_0=-\alpha$ at some $r=r_0>0$. Therefore we are going to find solutions of \eqref{2.2} under the initial condition
\be\label{4.2}
v(r_0)=-\alpha,~~~v_r(r_0)=0.
\ee

For simplicity, we introduce a change of independent variable
\be\label{4.2b}
t=\ln r,~~~t_0=\ln r_0.
\ee
Then \eqref{2.2} and \eqref{4.2} become
\bea
v''&=&\beta \e^{(2-2aN)t}\e^{a(v-\e^v)}\e^v(\e^v-1),~~~-\infty<t<\infty,\label{4.3}\\
v(t_0)&=&-\alpha,~~~v'(t_0)=0,\label{4.4}
\eea
where, and in the sequel of this section, $v'=\frac{\dd v}{\dd t}$ and $v(t)$ denotes the dependence of the solution $v$ of \eqref{2.2} on the new variable $t$.

\begin{lemma}\label{le4.2}
For any $t_0\in \mathbb{R}$, $\alpha>0$ and $aN<1$, there exists a unique global solution $v(t)$ of \eqref{4.3}--\eqref{4.4}. This solution satisfies $v(t)<0$ and
\be\nn
\lim_{t\to -\infty}v(t)=-\infty,~~~\lim_{t\to\infty}v(t)=-\infty.
\ee
\end{lemma}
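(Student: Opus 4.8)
The plan is to treat \eqref{4.3}--\eqref{4.4} as an initial value problem and to extract every qualitative feature from the sign of the nonlinearity. First I would record that the right-hand side of \eqref{4.3}, viewed as a function $f(t,v)=\beta\e^{(2-2aN)t}\e^{a(v-\e^v)}\e^v(\e^v-1)$, is smooth in $(t,v)$ and hence locally Lipschitz in $v$ on every strip; the Picard--Lindel\"of theorem then yields a unique local solution through the data $v(t_0)=-\alpha$, $v'(t_0)=0$, and uniqueness on the whole maximal interval of existence follows by the standard continuation argument. The entire lemma then rests on one elementary but decisive observation: the factor $\e^v(\e^v-1)$ is strictly negative precisely when $v<0$, so that $v''<0$ at every point where $v<0$.

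Next I would establish the a priori bound $v(t)\le-\alpha<0$ on the maximal existence interval by a barrier argument. Suppose $v$ first returned to $0$ at some $t^{*}>t_0$; on $(t_0,t^{*})$ one has $v<0$, hence $v''<0$, so $v'$ is strictly decreasing and, since $v'(t_0)=0$, strictly negative there, forcing $v$ to decrease and contradicting $v(t^{*})=0>v(t_0)$. The mirror argument rules out a return to $0$ for $t<t_0$. Consequently $v<0$ throughout, so $v''<0$ everywhere, i.e. $v$ is concave with its unique global maximum at $t_0$, consistent with Lemma \ref{le4.1} and the shooting set-up.

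With $v\le-\alpha$ in force, the nonlinear factor $g(v)=\e^{a(v-\e^v)}\e^v(\e^v-1)$ is bounded on $(-\infty,-\alpha]$, being continuous, finite at $v=-\alpha$, and tending to $0$ as $v\to-\infty$. Hence $|v''|\le\beta M\,\e^{(2-2aN)t}$ with $M=\sup_{v\le-\alpha}|g(v)|$. On any finite $t$-interval this bounds $v''$, and therefore $v'$ and $v$, which rules out finite-time blow-up and upgrades the local solution to a global one on all of $\mathbb{R}$, with uniqueness persisting globally.

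Finally I would read off the two limits from concavity. For $t\to+\infty$, since $v'$ is decreasing with $v'(t_0)=0$, one has $v'(t)\le v'(t_0+1)<0$ for $t\ge t_0+1$, so $v(t)$ decreases at least linearly and $v(t)\to-\infty$. The delicate end is $t\to-\infty$, and this is exactly where the hypothesis $aN<1$ enters: because $2-2aN>0$, the bound $|v''|\le\beta M\,\e^{(2-2aN)t}$ is integrable near $-\infty$, so $v'(t)=-\int_t^{t_0}v''(s)\,\dd s$ converges to a finite limit $L=-\int_{-\infty}^{t_0}v''\,\dd s$; as $v''<0$ everywhere, this integral is strictly negative, giving $L>0$. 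Thus $v'(t)\ge L/2>0$ for all sufficiently negative $t$, which forces $v(t)\to-\infty$ as $t\to-\infty$. The main obstacle is precisely this last step: controlling the sign and size of $v'(-\infty)$, since without the integrability afforded by $aN<1$ the solution might instead level off at a finite limit, and it is the strict positivity of $L$, not merely its finiteness, that guarantees genuine divergence to $-\infty$.
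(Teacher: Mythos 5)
Your proof is correct, and its overall strategy coincides with the paper's: use the sign of the factor $\e^v(\e^v-1)$ to show $v<0$ on the maximal interval, deduce global existence from the resulting bound on the nonlinearity, and obtain the limit at $t\to-\infty$ from the fact that $v'(t)$ converges (by integrability of $\e^{(2-2aN)t}$ near $-\infty$, which is exactly where $aN<1$ enters, as you note) to a strictly positive constant. The one place you genuinely diverge is the limit at $t\to+\infty$: the paper argues by dichotomy, assuming $v(t)\to a$ finite and deriving a contradiction from the integral representation $v'(t)=\beta\int_{t_0}^t\e^{(2-2aN)s}\e^{a(v-\e^v)}\e^v(\e^v-1)\,\dd s$ with the bracket bounded away from zero on $[a,-\alpha]$ (its inequality (4.6) as printed uses the minimum and the direction ``$\geq$'', which gives a lower bound tending to $-\infty$ and is a sign slip; the intended bound is the upper one, forcing $v'\to-\infty$). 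Your argument avoids this entirely: strict concavity gives $v'(t)\le v'(t_0+1)<0$ for $t\ge t_0+1$, hence at least linear decay of $v$, with no case analysis and no contradiction argument. This is both shorter and more robust than the paper's version of that step; the rest (local Picard--Lindel\"of existence, the barrier argument at the first zero of $v$, boundedness of $g(v)=\e^{a(v-\e^v)}\e^v(\e^v-1)$ on $(-\infty,-\alpha]$) matches the paper's proof in substance, differing only in that you work with the differential form $v''<0$ where the paper works with the integrated form (4.5).
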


\begin{proof}
Let $v(t)$ be a local solution of \eqref{4.3}--\eqref{4.4}, then in the interval of existence, there holds
\be\label{4.5}
v'(t)=\beta\int_{t_0}^t \e^{(2-2aN)s}\e^{a(v(s)-\e^{v(s)})}\e^{v(s)}(\e^{v(s)}-1)\dd s.
\ee

We first show that, for all $t$, where $v(t)$ exists, there holds $v(t)<0$. Otherwise, if there exists a $\tilde{t}>t_0$ such that $v(\tilde{t})\geq 0$, we may assume $\tilde{t}$ is such that
\be\nn
\tilde{t}=\inf\{t\geq t_0~|~v(t)~\text{exists and}~v(t)\geq 0\}.
\ee
Then $\tilde{t}>t_0$ and $v(\tilde{t})=0$. It is obvious that $v(t)<0$ for all $t_0\leq t<\tilde{t}$. However, \eqref{4.5} implies $v'(t)<0$ for all $t_0<t\leq\tilde{t}$. Thus $v(\tilde{t})<0$, which makes a contradiction.

Similarly, if there exists a $\tilde{t}<t_0$ such that $v(\tilde{t})\geq 0$ and $v(t)<0$ for all $\tilde{t}<t\leq t_0$. Then $v'(t)>0$ for $\tilde{t}\leq t<t_0$. So $v(\tilde{t})<0$, which makes a contradiction again.

In view of $v(t)<0$ and \eqref{4.5}, we see that $v'(t)$ cannot blow up in finite time. Therefore, the solution of \eqref{4.3}--\eqref{4.4} exists globally in $t\in(-\infty, \infty)$.

Second, we show that $v(t)\to -\infty$ as $t\to -\infty$. From the property $v(t)<0$ and \eqref{4.5}, we see
\be\nn
\lim_{t\to-\infty}v'(t)=\beta\int_{t_0}^{-\infty} \e^{(2-2aN)s}\e^{a(v(s)-\e^{v(s)})}\e^{v(s)}(\e^{v(s)}-1)\dd s=C>0,
\ee
which implies $v(t)\to -\infty$ as $t\to-\infty$.

Now, we show the behavior $v(t)\to -\infty$ as $t\to\infty$. Because of $v(t)<0$ and \eqref{4.5}, we have $v'(t)<0$ for $t>t_0$. Thus, either $v(t)\to-\infty$ or $v(t)\to$ a finite number $a<-\alpha<0$ as $t\to\infty$. However, only the former case can happen. Otherwise, assume $a<v(t)\leq-\alpha$, we obtain from \eqref{4.5} that
\bea\label{4.6}
v'(t)&\geq&\beta\int_{t_0}^t \e^{(2-2aN)s}\left[\min_{a\leq v\leq -\alpha}\left\{\e^{a(v(s)-\e^{v(s)})}\e^{v(s)}(\e^{v(s)}-1)\right\}\right]\dd s\nn\\
&=&-C_1\int_{t_0}^t \e^{(2-2aN)s}\dd s=\frac{C_1}{2aN-2}(\e^{(2-2aN)t}-\e^{(2-2aN)t_0}),~~t>t_0,
\eea
where $C_1>0$ is a constant. As a consequence of \eqref{4.6}, we see that $v(t)\to -\infty$ as $t\to\infty$. This reaches a contradiction.

Finally, we see from the uniqueness theorem for solutions to initial value problems of ordinary differential equations that the global solution $v(t)$ is unique.
\end{proof}

With respect to the new variable $t$ defined in \eqref{4.2b}, the boundary condition \eqref{2.3} reads
\be\label{4.7}
\lim_{t\to-\infty}\frac{v(t)}{t}=2N.
\ee

\begin{lemma}\label{le4.3}
For any given $\alpha\geq\ln\frac{a}{1+a-\sqrt{1+a}}$, there exists $t_0=t_0(\alpha)$ such that the unique solution of \eqref{4.3}--\eqref{4.4} satisfies the condition \eqref{4.7}.
\end{lemma}

\begin{proof}
For any $t_0\in(-\infty, \infty)$ and $\alpha>0$, let $v=v(t; t_0, \alpha)$ be the unique global solution of \eqref{4.3}--\eqref{4.4}. Then $v<0$ and $v\to -\infty$ as $t\to-\infty$ by Lemma \ref{le4.2}. Therefore, we see from \eqref{4.7} that
\be\label{4.8}
\eta(t_0, \alpha)\equiv\lim_{t\to-\infty}v'(t; t_0, \alpha)=2N.
\ee
Furthermore, \eqref{4.5} gives us
\be\label{4.9}
\eta(t_0, \alpha)=-\beta\int_{-\infty}^{t_0} \e^{(2-2aN)s}\e^{a(v(s;t_0,\alpha)-\e^{v(s;t_0,\alpha)})}\e^{v(s;t_0,\alpha)}(\e^{v(s;t_0,\alpha)}-1)\dd s.
\ee
Since $v<0$ and $v$ is a continuous function of $t_0, \alpha$, \eqref{4.9} tells us that $\eta$ depends continuously on $t_0, \alpha$. Our goal now is to show that there are $t_0, \alpha$ to make $\eta$ fulfill the condition \eqref{4.8}. We will use three steps to illustrate this fact.

{\em Step 1.} From \eqref{4.3} we get $v''>-\beta\e^{(2-2aN)t}\e^v$. Set $\omega=(2-2aN)t+v$, then
\be\label{4.10}
\omega''>-\beta\e^\omega.
\ee
However, \eqref{4.5} says that $v'(t)\geq 0$ for $t\leq t_0$, so we have $\omega'(t)>0$ for $t\leq t_0$. Multiplying the both sides of \eqref{4.10} by $\omega'$ and integrating on $(t, t_0)$, we get
\be\nn
(2-2aN)^2-(\omega'(t))^2>2\beta(\e^{\omega(t)}-\e^{(2-2aN)t_0-\alpha})>-2\beta\e^{(2-2aN)t_0-\alpha},~~t<t_0,
\ee
which means
\be\label{4.11}
0<v'(t;t_0,\alpha)<\sqrt{(2-2aN)^2+2\beta\e^{(2-2aN)t_0-\alpha}}-(2-2aN)\equiv K,~~t<t_0.
\ee
Furthermore, we obtain a useful inequality from \eqref{4.11},
\be\label{4.12}
-\alpha>v(t;t_0,\alpha)>-\alpha-K(t_0-t),~~t<t_0.
\ee

{\em Step 2.} Since $\e^{a(v-\e^v)}\e^v(\e^v-1)$ is a decreasing function in $v\in(-\infty, \ln\frac{1+a-\sqrt{1+a}}{a})$. Thus, when $\alpha\geq\ln\frac{a}{1+a-\sqrt{1+a}}$, in view of \eqref{4.3} and \eqref{4.12}, we have
\be\label{4.13}
v''<\beta\e^{(2-2aN)t}\e^{a(-\alpha-K(t_0-t)-\e^{-\alpha-K(t_0-t)})}\e^{-\alpha-K(t_0-t)}(\e^{-\alpha-K(t_0-t)}-1),~~t<t_0.
\ee
Integrating \eqref{4.13} over $(-\infty,t_0)$ gives
\bea\label{4.14}
\eta(t_0,\alpha)&=&\lim_{t\to -\infty}v'(t;t_0,\alpha)\nn\\
&\geq&-\beta\int_{-\infty}^{t_0}\e^{(2-2aN)s}\e^{a(-\alpha-K(t_0-s)-\e^{-\alpha-K(t_0-s)})}\e^{-\alpha-K(t_0-s)}(\e^{-\alpha-K(t_0-s)}-1)\dd s\nn\\
&=&\beta\e^{(2-2aN)t_0-a(\alpha+\e^{-\alpha})-\alpha}\left(\frac{1}{2-2aN+aK+K}-\frac{\e^{-\alpha}}{2-2aN+aK+2K}\right)\nn\\
&\geq&\beta\e^{(2-2aN)t_0-a(\alpha+\e^{-\alpha})-\alpha}\frac{1-\e^{-\alpha}}{2-2aN+aK+2K}\nn\\
&=&\frac{\beta\e^{(2-2aN)t_0-a(\alpha+\e^{-\alpha})-\alpha}(1-\e^{-\alpha})}{2-2aN+(a+2)(\sqrt{(2-2aN)^2+2\beta\e^{(2-2aN)t_0-\alpha}}-(2-2aN))}.
\eea

{\em Step 3.} By virtue of \eqref{4.11}, we have
\be\nn
0<\eta(t_0,\alpha)\leq\sqrt{(2-2aN)^2+2\beta\e^{(2-2aN)t_0-\alpha}}-(2-2aN).
\ee
Thus, for any $\alpha\geq\ln\frac{a}{1+a-\sqrt{1+a}}>0$, there is a suitable $t_0=t_0'$ so that $\eta(t_0',\alpha)<2N$. On the other hand, \eqref{4.14} tells us that for fixed $\alpha\geq\ln\frac{a}{1+a-\sqrt{1+a}}$, there exists some $t_0=t_0''$ to make $\eta(t_0'',\alpha)>2N$. As a consequence, we can find a point $t_0=t_0(\alpha)$ between $t_0'$ and $t_0''$ so that $\eta(t_0,\alpha)=2N$.
\end{proof}

In the following, we study the asymptotic behavior at infinity of the solution $v(t)$ of \eqref{4.3}--\eqref{4.4} derived from Lemma \ref{le4.3}.

\begin{lemma}\label{le4.4}
There is a constant $k>4+2N-4aN$ such that
\be\label{4.15}
\lim_{t\to\infty}v'(t)=-k.
\ee
\end{lemma}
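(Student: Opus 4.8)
The plan is to establish Lemma~\ref{le4.4} in two stages: first the existence of a finite limit $-k$ for the decreasing quantity $v'(t)$, and then the lower bound $k>4+2N-4aN$. Throughout set $\gamma=2-2aN$, which is positive because $aN<1$, and write $g(v)=\e^{a(v-\e^v)}\e^v(\e^v-1)$, so that \eqref{4.3} reads $v''=\beta\e^{\gamma t}g(v)$. By Lemma~\ref{le4.2} we have $v(t)<0$, hence $g(v)<0$ and $v''<0$ for every $t$; thus $v'$ is strictly decreasing, and since $v'(t_0)=0$ it is negative for $t>t_0$. Consequently $v'(t)$ tends either to $-\infty$ or to a finite limit $-k<0$.

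To rule out the first alternative I would argue by a bootstrap. Using $|g(v)|<\e^{(a+1)v}$, if $v'(t_1)=-M$ for some large $M$ then $v'<-M$ and $v(t)<v(t_1)-M(t-t_1)$ for all $t\ge t_1$, whence
\[
|v''(t)|<\beta\,\e^{(a+1)v(t_1)}\e^{(a+1)Mt_1}\,\e^{(\gamma-(a+1)M)t},\qquad t\ge t_1 .
\]
Choosing $M$ with $(a+1)M>\gamma$ makes the right-hand side integrable on $[t_1,\infty)$, so $v'$ can decrease only by a finite amount past $t_1$; this contradicts $v'\to-\infty$. Hence the limit $-k$ is finite. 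A matching lower bound $|g(v)|\ge c\,\e^{(a+1)v}$ for $v\le-\alpha$ then shows $k>\gamma/(a+1)$, which will also guarantee that the boundary terms in the energy identity below vanish.

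For the lower bound, letting $t\to\pm\infty$ in \eqref{4.5} together with $v'(-\infty)=2N$ (the content of \eqref{4.7}, supplied by Lemma~\ref{le4.3}) yields the exact flux identity
\[
k+2N=\beta\int_{-\infty}^{\infty}\e^{\gamma t}\,|g(v)|\,\dd t ,
\]
which may also be read off by integrating \eqref{2.1} over $\mathbb{R}^2$. To reach the threshold $4+2N-4aN=2\gamma+2N$ I would compare the solution with the Liouville profile obtained by retaining only the leading part $|g(v)|\sim\e^{(a+1)v}$ as $v\to-\infty$: with $w=\gamma t+(a+1)v$ the comparison equation is $w''=-(a+1)\beta\e^{w}$, whose first integral $(w')^2=2E-2(a+1)\beta\e^{w}$ forces $w'(\pm\infty)=\pm\sqrt{2E}$. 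Matching $w'(-\infty)=\gamma+2N(a+1)$ fixes $\sqrt{2E}$, and $w'(\infty)=\gamma-(a+1)k$ then gives the model value $k=2N+2\gamma/(a+1)$. The genuine nonlinearity differs from its leading part only by the factor $\e^{-a\e^v}(1-\e^v)$, which is monotone in $v$ on the relevant range (Step~2 of Lemma~\ref{le4.3}) and is controlled by the a priori bounds \eqref{4.11}--\eqref{4.12}; restoring this factor is what should account for the remaining gap to $2\gamma+2N$.

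The hard part will be extracting the sharp constant in this second stage. Because of the gravitational weight $\e^{(2-2aN)t}$ the equation \eqref{4.3} is non-autonomous and possesses no conserved energy --- in contrast with the topological case $aN=1$, where $\gamma=0$ and the first integral \eqref{2.12} is available. Multiplying \eqref{4.3} by $v'$ and integrating by parts against the weight only gives
\[
\tfrac12\left(k^2-4N^2\right)=\beta\gamma\int_{-\infty}^{\infty}\e^{\gamma t}\,|G(v)|\,\dd t>0,\qquad G(v)=\int_{-\infty}^{v}g(w)\,\dd w ,
\]
that is, the weaker bound $k>2N$. Closing the gap from this autonomous-style estimate, through the Liouville value $2N+2\gamma/(a+1)$, up to the asserted $2\gamma+2N$ is the crux: it requires trapping the solution between the Liouville profile and its correction on each of the half-lines $(-\infty,t_0)$ and $(t_0,\infty)$ and quantifying precisely how the asymmetry of $\e^{\gamma t}$ about the maximum $t_0$ enlarges the $(t_0,\infty)$ share of the flux relative to the $(-\infty,t_0)$ share.
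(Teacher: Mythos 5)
Your Stage 1 is correct and is, in essence, the paper's own argument: $v''<0$ makes $v'$ monotone; the bound $|g(v)|\le\e^{(a+1)v}$ plus the induced linear decay of $v$ makes $v''$ integrable near $+\infty$, ruling out $v'\to-\infty$ (the paper runs the same bootstrap with the cruder bound $|g(v)|\le \e^{v}$ and the rate $M=3-2aN$); and the consequence $k>\gamma/(a+1)$, where $\gamma=2-2aN$, is exactly the paper's \eqref{4.16}, needed later to kill boundary terms. The genuine gap is Stage 2, as you yourself concede: what you actually prove there is only $k>2N$, and everything from the Liouville profile onward is a plan, not an argument. The ingredient you are missing relative to the paper is a different integration by parts: instead of pairing the weight with the full antiderivative $G(v)=\int_{-\infty}^{v}g$, the paper pairs it with $\Phi(v)=\e^{a(v-\e^v)}\left(\e^v-\frac12\e^{2v}\right)$, which produces the three-term identity \eqref{4.17}; its first term equals $\gamma(k+2N)$ by your flux identity, its second term is manifestly positive, and the bound $\frac12(k^2-4N^2)>\gamma(k+2N)$, i.e.\ $k>2N+2\gamma=4+2N-4aN$, would follow provided the third term, $\beta a\int\e^{\gamma s}P(v)v'\,\dd s$ with $P(v)=\e^{a(v-\e^v)}(1-\e^v)\left(1-\frac12\e^v\right)\e^v>0$, is nonnegative.

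However, your own Liouville computation, pushed one step further, shows that your plan cannot close the gap --- and that the third term above is in fact negative, so the paper's closing step is not sound either. For large $\alpha$ the whole trajectory lies in $\{v\le-\alpha\}$, where the factor you propose to ``restore'', $\e^{-a\e^v}(1-\e^v)$, equals $1+O(\e^{-\alpha})$ uniformly; perturbing the Liouville first integral then gives $k=2N+\frac{2\gamma}{1+a}+O(\e^{-\alpha})$, strictly below $2N+2\gamma$ once $a>0$. Thus the correction you are counting on vanishes precisely in the regime where it would be needed, and deep solutions violate the stated threshold. Consistently, the paper's positivity claim for the third term of \eqref{4.17} rests on a reversed inequality: on $(t_0,\infty)$ one has $v'<0$ and $\e^{\gamma s}\ge\e^{\gamma t_0}$, so pointwise $\e^{\gamma s}P(v)v'\le\e^{\gamma t_0}P(v)v'$, not $\ge$. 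Indeed, substituting $u=v(s)$ on the two monotone halves of the trajectory, with inverse branches $\sigma_1(u)<t_0<\sigma_2(u)$, the third term equals
\be\nn
\beta a\int_{-\infty}^{-\alpha}\left(\e^{\gamma\sigma_1(u)}-\e^{\gamma\sigma_2(u)}\right)P(u)\,\dd u<0,
\ee
so \eqref{4.18} does not follow. In short: your gap is real, the route you sketch cannot close it, and the inequality $k>4+2N-4aN$ itself fails for large $\alpha$; what both your argument and the paper's genuinely establish is $k>2N$, with the Liouville value $2N+2\gamma/(1+a)$ --- not $2N+2\gamma$ --- being the natural candidate for the sharp lower bound.
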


\begin{proof}
From \eqref{4.3} and Lemma \ref{le4.2}, we see $v''<0$. Therefore, either $v'(t)\to -\infty$ or a finite number as $t\to\infty$. However, the first situation cannot happen. In fact, otherwise, if $v'(t)\to -\infty$ as $t\to\infty$, then there is a $\bar{t}$ such that $v'(t)<-3+2aN$(say) for $t>\bar{t}$. Thus, $v(t)<(2aN-3)t+C~(t>t_0)$ for some constant $C$ and
\bea\nn
\lim_{t\to\infty}v'(t)&=&\beta\int_{t_0}^{\infty} \e^{(2-2aN)s}\e^{a(v(s)-\e^{v(s)})}\e^{v(s)}(\e^{v(s)}-1)\dd s\nn\\
&>&-\beta\int_{t_0}^{\infty} \e^{(2-2aN)s}\e^{(2aN-3)s+C}\dd s=-\beta\e^{C-t_0}>-\infty,
\eea
which reaches a contradiction.

Therefore, there must be some $k>0$ to make \eqref{4.15} hold. All that remains is to show that $k>4+2N-4aN$.

Since $v'(t)$ is decreasing for $t\geq t_0$, then $v'(t)>-k, t\geq t_0$ and $v(t)>-kt+C, t\geq t_0$, where $C$ is a constant. Note that $v(t)\to -\infty$ and $v'(t)\to-k$ as $t\to\infty$, we see that the integral
\be\nn
\int_{t_0}^{\infty} \e^{(2-2aN)s}\e^{a(v(s)-\e^{v(s)})}\e^{v(s)}\dd s
\ee
converges. Hence, there must holds $k>\frac{2-2aN}{a+1}$, which in turn implies
\be\label{4.16}
\lim_{t\to\infty}\e^{(2-2aN)t}\e^{a(v(t)-\e^{v(t)})}\e^{v(t)}=\lim_{t\to\infty}\e^{t\left(2-2aN+a\frac{v(t)}{t}-a\frac{\e^{v(t)}}{t}+\frac{v(t)}{t}\right)}=0.
\ee

Now, multiplying the equation \eqref{4.3} by $v'$, integrating over $(-\infty, \infty)$, and using \eqref{4.16}, we have
\bea
\frac 12(k^2-4N^2)=&&\beta(2-2aN)\int_{-\infty}^{\infty}\e^{(2-2aN)s}\e^{a(v(s)-\e^{v(s)})}\e^{v(s)}(1-\e^{v(s)})\dd s\nn\\
&&+\frac{\beta(2-2aN)}{2}\int_{-\infty}^{\infty}\e^{(2-2aN)s}\e^{a(v(s)-\e^{v(s)})}\e^{2v(s)}\dd s\label{4.17}\\
&&+\beta a\int_{-\infty}^{\infty}\e^{(2-2aN)s}\e^{a(v(s)-\e^{v(s)})}(1-\e^{v(s)})\left(1-\frac{\e^{v(s)}}{2}\right)\e^{v(s)}v'(s)\dd s.\nn
\eea
On the right--hand side of \eqref{4.17}, the first term equals $(2-2aN)(k+2N)$, the second term is positive, and the last term
\bea\nn
&&\beta a\int_{-\infty}^{\infty}\e^{(2-2aN)s}\e^{a(v(s)-\e^{v(s)})}(1-\e^{v(s)})\left(1-\frac{\e^{v(s)}}{2}\right)\e^{v(s)}v'(s)\dd s.\nn\\
>&&\beta a\int_{-\infty}^{t_0}\e^{(2-2aN)s}\e^{a(v(s)-\e^{v(s)})}(1-\e^{v(s)})\left(1-\frac{\e^{v(s)}}{2}\right)\e^{v(s)}v'(s)\dd s\nn\\
&&+\beta a\e^{(2-2aN)t_0}\int_{t_0}^{\infty}\e^{a(v(s)-\e^{v(s)})}(1-\e^{v(s)})\left(1-\frac{\e^{v(s)}}{2}\right)\e^{v(s)}\dd v
\eea
is also positive.

In other words,
\be\label{4.18}
\frac 12(k^2-4N^2)>(2-2aN)(k+2N).
\ee
Therefore $k>4+2N-4aN$ as desired.

At this point, combining Lemmas \ref{le4.1}--\ref{le4.4}, we can complete the proof of Theorem \ref{th2.2}.
\end{proof}

\section{Existence of strings over compact surface}\label{s3}
\setcounter{equation}{0}

In this section we prove the existence of multiple strings over a compact surface $S$. It will be seen that the total number of strings is of technical significance. The proof splits into a few steps.

{\em Step 1. The perturbed problem}

Let $v_0$ be a solution of the equation
\be\label{5.1}
\triangle v_0=-\frac{4\pi N}{|S|}+4\pi\sum_{s=1}^N\delta_{p_s}(x).
\ee
Inserting $v=\vp+v_0$ in \eqref{1.21}, we obtain
\be\label{5.2}
\triangle\vp=\beta\e^{a(\vp+v_0-\e^{\vp+v_0})}\prod_{s=1}^N|x-p_s|^{-2a}\e^{\vp+v_0}(\e^{\vp+v_0}-1)+\frac{4\pi N}{|S|}~~~\text{on}~S.
\ee

Let $(U_s,(x^j))$ be an isothermal coordinate chart near $p_s\in S$ so that $x^j(p_s)=0~(j=1,2)$. According to \cite{Au}, when $U_s$ is small the function $v_0$ satisfies
\be\label{5.3}
v_0(x)=\ln|x|^2+w_s(x)~~~~\text{in}~(U_s,(x^j)),
\ee
where $w_s$ is a small function on $U_s$.

For any $\sigma>0$ small, let $\rho$ be a smooth function defined on $S$ so that $0\leq\rho\leq 1$ and
\be\nn
\rho(p)=
 \begin{cases}
  & 1,~~~|x(p)|<\sigma,\\
  & 0,~~~|x(p)|>2\sigma.
\end{cases}
\ee

In order to over the difficulty caused by the singular factor $\prod_{s=1}^N|x-p_s|^{-2a}$. We consider a perturbed version of the equation \eqref{5.2}
\be\label{5.5}
\triangle\vp=\beta\e^{a(\vp+v_0-\e^{\vp+v_0})}\prod_{s=1}^N(|x-p_s|^2+\delta\rho(x))^{-a}\e^{\vp+v_0}(\e^{\vp+v_0}-1)+\frac{4\pi N}{|S|}~~~\text{on}~S.
\ee

{\em Step 2. The sub/supersolution}

In the following we will obtain the solution of \eqref{5.5} via sub/supersolution.

We first find a subsolution of \eqref{5.5}.

\begin{lemma}\label{le5.1}
There is a smooth function $w_-$ on $S$ independent of $\delta$ such that
\be\label{5.6}
\triangle w_->\beta\e^{a(w_-+v_0-\e^{w_-+v_0})}\prod_{s=1}^N(|x-p_s|^2+\delta\rho(x))^{-a}\e^{w_-+v_0}(\e^{w_-+v_0}-1)+\frac{4\pi N}{|S|}
\ee
establishes on $S$ for some suitable $\beta$. In other words, $w_-$ is a subsolution of \eqref{5.5}.
\end{lemma}

\begin{proof}
Let $(U_s,(x^j))$ be a coordinate chart near $p_s\in S$ and $\sigma>0$ be a small number such that
\be\nn
\{x\in\mathbb{R}^2~|~|x|<3\sigma\}\subset\{x\in\mathbb{R}^2~|~x=x(p)~\text{for some}~p\in U_s\},~~s=1,2,\ldots,N.
\ee
Define a function $f_\sigma\in C^\infty(S)$ such that $0\leq f_\sigma\leq 1$ and
\be\nn
f_\sigma=
 \begin{cases}
  & 1,~~~|x(p)|<\sigma~\text{and}~p\in U_s, s=1,2,\ldots,N,\\
  & 0,~~~|x(p)|>2\sigma~\text{and}~p\in U_s, s=1,2,\ldots,N~\text{or}~p\in S-\cup_{s=1}^N U_s.
\end{cases}
\ee
If $C(\sigma)$ is a function satisfies
\be\label{5.7}
C(\sigma)=\frac{8\pi N}{|S|^2}\int_S f_\sigma \dd \Omega,
\ee
then the Theorem 4.7 in \cite{Au} says that the solution of the equation
\be\label{5.8}
\triangle w_-=\frac{8\pi N}{|S|}f_\sigma-C(\sigma)
\ee
is unique up to an additive constant. Besides, by \eqref{5.7} and the definition of $f_\sigma$, we see that $C(\sigma)\to 0$ as $\sigma\to 0$. Hence, we can choose a suitable $\sigma>0$ to make
\be\label{5.9}
\frac{8\pi N}{|S|}-C(\sigma)>\frac{4\pi N}{|S|}.
\ee

Set
\be\nn
U_s^\sigma=\{p\in S~|~p\in U_s~\text{and}~|x(p)|<\sigma\},~~s=1,2,\ldots,N.
\ee
Then, by virtue of \eqref{5.8} and \eqref{5.9}, we have
\be\nn
\triangle w_->\frac{4\pi N}{|S|}~~~\text{in}~\cup_{s=1}^N U_s^\sigma.
\ee

Furthermore, we can choose $w_-$ such that $\e^{w_-+v_0}-1<0$ on $S$. As a consequence, the inequality \eqref{5.6} is valid in $\cup_{s=1}^N U_s$ for any $\beta$ and $\delta$.

Denote
\bea\nn
\mu_0&=&\inf\left\{\e^{w_-+v_0}~|~x\in S-\cup_{s=1}^N U_s^\sigma\right\},\nn\\
\mu_1&=&\sup\left\{\e^{w_-+v_0}~|~x\in S-\cup_{s=1}^N U_s^\sigma\right\}.\nn
\eea
Then, we see $0<\mu_0<\mu_1<1$ and
\bea\nn
&&\e^{a(w_-+v_0-\e^{w_-+v_0})}\prod_{s=1}^N(|x-p_s|^2+\delta\rho(x))^{-a}\e^{w_-+v_0}(\e^{w_-+v_0}-1)\nn\\
\leq&&\mu_0^{a+1}(\mu_1-1)\e^{-a\mu_1}\prod_{s=1}^N(|x-p_s|^2+1)^{-a}<0~~~\text{in}~S-\cup_{s=1}^N U_s^\sigma.\nn
\eea
Hence, by making $\beta>0$ sufficiently large, we see that \eqref{5.6} holds on $S-\cup_{s=1}^N U_s^\sigma$ as well. Then, Lemma \ref{le5.1} follows.
\end{proof}

\begin{lemma}\label{le5.2}
Define $\vp_1=-v_0$, then $\vp_1>w_-$ on $S$.
\end{lemma}

\begin{proof}
Let $\sigma>0$ is small so that $U_s^\sigma\cap U_{s'}^\sigma=\emptyset$ for $s\neq s'$. Clearly, $w_-<\vp_1$ in $\overline{U}_s^\sigma (s=1,\ldots,N)$ when $\sigma$ is sufficiently small.

By virtue of \eqref{5.1}, we can rewrite \eqref{5.6} in the form
\be\label{5.10}
\triangle (w_-+v_0)>\beta\e^{a(w_-+v_0-\e^{w_-+v_0})}\prod_{s=1}^N(|x-p_s|^2+\delta\rho(x))^{-a}\e^{w_-+v_0}(\e^{w_-+v_0}-1)~~~\text{in}~S-\cup_{s=1}^N U_s^\sigma.
\ee
We already have $w_-+v_0<0$ on $\pa U_s^\sigma~(s=1,\ldots,N)$. If there exists a point $p$ such that $(w_-+v_0)(p)\geq 0$, then the function $w_-+v_0$ has a nonnegative interior maximum in $S-\cup_{s=1}^N U_s^\sigma$ which is impossible due to \eqref{5.10} and the maximum principle.

In summary, $\vp_1>w_-$ on $S$.
\end{proof}

{\em Step 3. Solution of the perturbed equation}

It is easy to see that $\vp_1$ is a supersolution of \eqref{5.5} and $\vp_1$ is singular at the points $p_1,\ldots,p_N$. Define the following iterative scheme
\bea\label{5.11}
(\triangle-C_\delta)\vp_n=&&\beta\e^{a(\vp_{n-1}+v_0-\e^{\vp_{n-1}+v_0})}\prod_{s=1}^N(|x-p_s|^2+\delta\rho(x))^{-a}\e^{\vp_{n-1}+v_0}(\e^{\vp_{n-1}+v_0}-1)\nn\\
&&-C_\delta\vp_{n-1}+\frac{4\pi N}{|S|}~~~\text{on}~S,~~n=2,3\ldots,
\eea
where $C_\delta>0$ is a constant to be determined in the following.

Set
\be\label{5.12}
f(t)=\e^{a(t-\e^t)}\e^t(\e^t-1),
\ee
then $f'(t)=\e^{a(t-\e^t)+2t}(4-\e^t)$ is bounded for $t\in\mathbb{R}$. Let
\be\nn
C_\delta=1+\beta\sup_{x\in S}\prod_{s=1}^N(|x-p_s|^2+\delta\rho(x))^{-a}\cdot\sup_{t\in\mathbb{R}}\{f'(t)\}
\ee
in \eqref{5.11}. We have

\begin{lemma}\label{le5.3}
There holds on $S$ the inequality
\be\label{5.13}
\vp_1>\vp_2>\ldots>\vp_n>\ldots>w_-.
\ee
\end{lemma}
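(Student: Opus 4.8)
The plan is to establish the chain of inequalities \eqref{5.13} by induction on $n$, using the monotone iteration scheme \eqref{5.11} together with the supersolution property of $\vp_1$, the subsolution property of $w_-$ from Lemma \ref{le5.1}, and the comparison (maximum) principle for the operator $\triangle - C_\delta$. The inductive structure has two distinct layers: first I would show $\vp_1 > \vp_2$, and separately that each $\vp_n$ stays above the subsolution $w_-$; then I would close the induction by showing that $\vp_{n-1} > \vp_n$ together with $\vp_n > w_-$ forces $\vp_n > \vp_{n+1}$.

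\textbf{Step 1: The base case $\vp_1 > \vp_2$.} Here I would exploit that $\vp_1 = -v_0$ is a supersolution of \eqref{5.5}, meaning $\triangle \vp_1 \leq \beta \e^{a(\vp_1+v_0-\e^{\vp_1+v_0})}\prod_{s=1}^N(|x-p_s|^2+\delta\rho(x))^{-a}\e^{\vp_1+v_0}(\e^{\vp_1+v_0}-1) + \frac{4\pi N}{|S|}$. Subtracting this inequality from the defining equation \eqref{5.11} for $\vp_2$ (which uses $\vp_1$ on its right-hand side), the nonlinear terms cancel exactly because both involve $f(\vp_1+v_0)$, and one is left with $(\triangle - C_\delta)(\vp_2 - \vp_1) \geq 0$ on $S$ (away from the singular points, where $\vp_1$ blows up to $+\infty$ so the comparison is automatic there). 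The maximum principle for $\triangle - C_\delta$ with $C_\delta > 0$ then yields $\vp_2 \leq \vp_1$.

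\textbf{Step 2: The inductive step, and keeping iterates above $w_-$.} The key algebraic device is the choice of $C_\delta$: since $f'(t) = \e^{a(t-\e^t)+2t}(4-\e^t)$ is bounded (as the paper notes), the map $t \mapsto f(t) - C_\delta t / \big(\beta \prod_s(\cdots)^{-a}\big)$ is \emph{monotone decreasing}, so the entire right-hand side of \eqref{5.11}, viewed as a function of $\vp_{n-1}$, is order-reversing. Assuming $\vp_{n-1} < \vp_{n-2}$, subtracting the equations for $\vp_n$ and $\vp_{n-1}$ gives $(\triangle - C_\delta)(\vp_n - \vp_{n-1})$ equal to the difference of two right-hand sides evaluated at $\vp_{n-1}$ and $\vp_{n-2}$; monotonicity makes this difference nonnegative, and the maximum principle gives $\vp_n < \vp_{n-1}$. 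For the lower bound I would compare $\vp_n$ against $w_-$ in the same way, using that $w_-$ is a \emph{sub}solution (Lemma \ref{le5.1}) and $\vp_{n-1} > w_-$ (starting from Lemma \ref{le5.2} for $n=2$): the same monotonicity argument yields $(\triangle - C_\delta)(\vp_n - w_-) \leq 0$, hence $\vp_n > w_-$ by the maximum principle.

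\textbf{The main obstacle} will be handling the singular behavior of $\vp_1 = -v_0$ at the points $p_1,\ldots,p_N$, where $v_0 \sim \ln|x|^2 \to -\infty$ so $\vp_1 \to +\infty$. The comparison argument via $\triangle - C_\delta$ is clean on $S \setminus \{p_s\}$, but one must verify that the maximum of $\vp_2 - \vp_1$ (and of each $\vp_n - w_-$, etc.) is not attained at, or pushed toward, the singular points, and that $\vp_2$ is genuinely smooth there. The natural resolution is that the right-hand side of \eqref{5.11} defining $\vp_2$ contains the bounded factor $(|x-p_s|^2 + \delta\rho(x))^{-a}$ rather than the truly singular $|x-p_s|^{-2a}$ — this is precisely why the perturbed equation \eqref{5.5} was introduced — so $\vp_2, \vp_3, \ldots$ are all smooth on all of $S$, and only $\vp_1$ is singular. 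Thus I would argue that near each $p_s$ the inequality $\vp_1 > \vp_n$ holds trivially since $\vp_1 \to +\infty$ while $\vp_n$ stays bounded, and restrict the delicate maximum-principle comparisons to the region where all functions involved are smooth and finite. Once this is in place, the full chain \eqref{5.13} follows by induction.
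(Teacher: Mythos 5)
Your proposal is correct and follows essentially the same route as the paper: a monotone iteration trapped between the supersolution $\vp_1$ and the subsolution $w_-$, with the maximum principle for $\triangle-C_\delta$ and the largeness of $C_\delta$ (from boundedness of $f'$) making the right-hand side of \eqref{5.11} order-reversing --- the paper phrases this same device via the mean value theorem, writing the difference of nonlinear terms as $\bigl(\beta\prod_{s=1}^N(|x-p_s|^2+\delta\rho)^{-a}f'(v_0+\xi)-C_\delta\bigr)$ times the difference of iterates. The only minor imprecision is your claim that $\vp_2$ is smooth ``because of the regularized factor'': the equation for $\vp_2$ still carries the logarithmically singular term $-C_\delta\vp_1$ on its right-hand side, so the paper instead uses $\vp_1\in L^p(S)$ and elliptic regularity to get $\vp_2\in W^{2,p}(S)\subset C^{1,\alpha}(S)$, hence bounded, which is all the comparison near the points $p_s$ requires.
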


\begin{proof}
Lemma \ref{le5.2} says that $\vp_1>w_-$. By \eqref{5.11}, we see that
\be\label{5.14}
(\triangle-C_\delta)\vp_2=-C_\delta\vp_1+\frac{4\pi N}{|S|}~~~\text{on}~S.
\ee
Thus $(\triangle-C_\delta)(\vp_2-\vp_1)=0$ in $S-\{p_1,\ldots,p_N\}$. Since $\vp_1\in L^p(S)$ for any $p>1$, we get that $\vp_2\in W^{2,p}(S)$ and $\vp_2\in C^{1,\alpha}(S)$ for any $0<\alpha<1$. In particular, $\vp_2$ is bounded. Using the maximum principle, we get $\vp_1>\vp_2$.

Besides, using the notation \eqref{5.12}, the equality \eqref{5.6} can be rewritten as
\be\nn
\triangle w_->\beta f(w_-+v_0)\prod_{s=1}^N(|x-p_s|^2+\delta\rho(x))^{-a}+\frac{4\pi N}{|S|}.
\ee
Thus, in view of Lemma \ref{le5.2} and \eqref{5.14}, we have
\bea\nn
&&(\triangle-C_\delta)(w_--\vp_2)\nn\\
>&&\beta\prod_{s=1}^N(|x-p_s|^2+\delta\rho(x))^{-a}(f(w_-+v_0)-f(\vp_1+v_0))-C_\delta(w_--\vp_1)\nn\\
=&&\left(\beta\prod_{s=1}^N(|x-p_s|^2+\delta\rho(x))^{-a}f'(v_0+\xi)-C_\delta\right)(w_--\vp_1)\nn\\
>&&0~~~~(w_-<\xi<\vp_1).\nn
\eea
Hence, the maximum principle gives us $w_-<\vp_2$.

Suppose we have shown that $\vp_{k-1}>\vp_k>w_-$ with $k\geq 2$. Then, for some $\xi_k$ lying between $\vp_{k-1}$ and $\vp_k$, we have
\bea\nn
&&(\triangle-C_\delta)(\vp_{k+1}-\vp_k)\nn\\
>&&\beta\prod_{s=1}^N(|x-p_s|^2+\delta\rho(x))^{-a}(f(\vp_k+v_0)-f(\vp_{k-1}+v_0))-C_\delta(\vp_k-\vp_{k-1})\nn\\
=&&\left(\beta\prod_{s=1}^N(|x-p_s|^2+\delta\rho(x))^{-a}f'(v_0+\xi_k)-C_\delta\right)(\vp_k-\vp_{k-1})\nn\\
>&&0.\nn
\eea
Hence, $\vp_{k+1}<\vp_k$.

Moreover, for some $\xi_k$ lying between $w_-$ and $\vp_k$, we have
\bea\nn
&&(\triangle-C_\delta)(w_--\vp_{k+1})\nn\\
>&&\beta\prod_{s=1}^N(|x-p_s|^2+\delta\rho(x))^{-a}(f(w_-+v_0)-f(\vp_k+v_0))-C_\delta(w_--\vp_k)\nn\\
=&&\left(\beta\prod_{s=1}^N(|x-p_s|^2+\delta\rho(x))^{-a}f'(v_0+\xi)-C_\delta\right)(w_--\vp_k)\nn\\
>&&0.\nn
\eea
Thus, again, $w_-<\vp_{k+1}$. Consequently, the general property \eqref{5.13} holds.
\end{proof}

The equality \eqref{5.13} implies that the sequence $\{\vp_n\}$ defined in \eqref{5.11} is convergent, we may assume
\be\nn
\lim_{n\to\infty}\vp_n\equiv \vp^\delta.
\ee
Then $\vp^\delta$ is a solution of the equation \eqref{5.5}, and satisfying
\be\label{5.15}
\vp_1>\vp^\delta\geq w_-~~~\text{on}~S.
\ee

In order to find a solution of the original equation \eqref{5.2}, we need to consider the $\delta\to 0$ limit.

{\em Step 4. Passage to limit}

We see from \eqref{5.15} that, for any $p>1$, there exists a constant $C>0$ independent of $\delta$ so that
\be\label{5.16}
||\vp^\delta||_{L^p(S)}\leq C.
\ee
Next, we observe that the one--parameter function $f(\vp^\delta+v_0)$ is uniformly bounded.

Besides, since $(|x-p_s|^2+\delta\rho(x))^{-a}\leq(|x-p_s|^2)^{-a}$, we see that there is a constant $C>0$ independent of $\delta$ so that
\be\label{5.17}
|||x-p_s|^2+\delta\rho(x)||_{L^p(S)}\leq C
\ee
provided that
\be\label{5.18}
2ap<2.
\ee
Combining with the condition $\chi(S)=aN$ and $\chi(S)=2$, we arrive at
\be\label{5.19}
p<\frac{N}{2}.
\ee

Therefore, if $N\geq 3$, then there is a $p>1$ which makes \eqref{5.19} valid. Using \eqref{5.16} and \eqref{5.17} in \eqref{5.5}, we see that there is a constant $C>0$ independent of $\delta$ to confirm the $W^{2,p}$--norm of $\vp^\delta$,
\be\label{5.20}
||\vp^\delta||_{W^{2,p}(S)}\leq C,~~~\forall~0<\delta<1.
\ee
Furthermore, using the continuous embedding $W^{k,p}(S)\to C^m(S)$ for $0\leq m<k-2/p$ with $k=2$ and $p>1$, we see that $\{\vp^\delta\}$ is $C(S)$ bounded for any $\delta$.

Now, we rewrite \eqref{5.5} with $\vp=\vp^\delta$ as
\be\label{5.21}
\triangle\vp^\delta=\beta\e^{a(\vp^\delta+v_0-\e^{\vp^\delta+v_0})}\prod_{s=1}^N(|x-p_s|^2+\delta\rho(x))^{-a}\e^{\vp^\delta+v_0}(\e^{\vp^\delta+v_0}-1)+\frac{4\pi N}{|S|}.
\ee
Clearly, the right--hand side of \eqref{5.21} has uniformly bounded $L^p$--norm for any $p>1$. Thus \eqref{5.20} establishes for any $p>1$. Therefore, $\{\vp^\delta\}$ is bounded in $C^{1,\alpha}$ for any $0<\alpha<1$. In view of \eqref{5.21}, we obtain that $\{\vp^\delta\}$ is bounded in $C^{2,\alpha}(S)$. From the compact embedding $C^{2,\alpha}(S)\to C^{2}(S)$, we know that there is a convergent subsequence $\{\vp^{\delta_n}\} (\delta_n\to 0~\text{as}~n\to\infty)$ such that
\be\label{5.22}
\vp^{\delta_n}\to~\text{some element}~\vp~~~\text{in}~C^2(S).
\ee
Inserting this fact into \eqref{5.21}, we see that $\vp$ is a solution of \eqref{5.2}.

The proof of Theorem \ref{th2.3} is complete.

\medskip

\end{document}